\documentclass[nofootinbib,twocolumn,pra,letterpaper,longbibliography]{revtex4-2}
\usepackage{latexsym,amsmath,amssymb,amsfonts,graphicx,color,amsthm}
\usepackage{enumerate}
\usepackage{braket}
\usepackage{adjustbox}
\usepackage{footnote}
\usepackage[dvipsnames]{xcolor}
\usepackage{tipa}
\usepackage{textcomp}
\usepackage{fontenc}
\usepackage{mathrsfs}
\usepackage{color}
\usepackage{colortbl}
\usepackage{graphics,graphicx}
\usepackage{txfonts}
\usepackage{lipsum, babel}

\usepackage[unicode=true,pdfusetitle, bookmarks=true,bookmarksnumbered=false,bookmarksopen=false, breaklinks=false,pdfborder={0 0 0},backref=false,colorlinks=false] {hyperref}
\hypersetup{ colorlinks,linkcolor=myurlcolor,citecolor=myurlcolor,urlcolor=myurlcolor}

\definecolor{myurlcolor}{rgb}{0,0,0.7}

\usepackage{float}
\usepackage{hyperref}
\usepackage{subcaption}

\newcommand{\cA}{\mathcal{A}}

\newcommand{\cC}{\mathcal{C}}

\newcommand{\cE}{\mathcal{E}}

\newcommand{\cG}{\mathcal{G}}
\newcommand{\cH}{\mathcal{H}}
\newcommand{\cI}{\mathcal{I}}
\newcommand{\cJ}{\mathcal{J}}
\newcommand{\cK}{\mathcal{K}}
\newcommand{\cL}{\mathcal{L}}
\newcommand{\cM}{\mathcal{M}}

\newcommand{\cS}{\mathcal{S}}
\newcommand{\cT}{\mathcal{T}}
\newcommand{\cU}{\mathcal{U}}

\newcommand{\Id}{\mathbb{I}}
\newcommand{\tr}{\text{Tr}}

\newtheorem{theorem}{Theorem}

\newtheorem{corollary}[theorem]{Corollary}

\newtheorem{example}{Example}

\newtheorem{obs}{Observation}

\begin{document}

\title{Improvement in quantum communication using quantum switch}
\author{Arindam Mitra$^{1,2}$}
\email{amitra@imsc.res.in}
\email{arindammitra143@gmail.com}
\affiliation{$^1$Optics and Quantum Information Group, The Institute of Mathematical Sciences,
C. I. T. Campus, Taramani, Chennai 600113, India.\\
$^2$Homi Bhabha National Institute, Training School Complex, Anushaktinagar, Mumbai 400094, India.}
\author{Himanshu Badhani$^{1,2}$}
\email{himanshub@imsc.res.in}
\email{himanshubadhani@gmail.com}

\affiliation{$^1$Optics and Quantum Information Group, The Institute of Mathematical Sciences,
C. I. T. Campus, Taramani, Chennai 600113, India.\\
$^2$Homi Bhabha National Institute, Training School Complex, Anushaktinagar, Mumbai 400094, India.}
\author{Sibasish Ghosh$^{1,2}$}
\email{sibasish@imsc.res.in}
\affiliation{$^1$Optics and Quantum Information Group, The Institute of Mathematical Sciences,
C. I. T. Campus, Taramani, Chennai 600113, India.\\
$^2$Homi Bhabha National Institute, Training School Complex, Anushaktinagar, Mumbai 400094, India.}

\date{}

\begin{abstract}
Applications of the quantum switch on quantum channels have recently become a topic of intense discussion. In the present work, we show that some useless (for communication) channels may provide useful communication under the action of quantum switch for several information-theoretic tasks: quantum random access codes, quantum steering, etc. We demonstrate that the quantum switch can also be useful in preventing the loss of coherence in a system when only coherence-breaking channels are the available channels for communication. We also show that if a useless quantum channel does not provide useful communication even after using a quantum switch, concatenating the channel with another suitable quantum channel, and subsequently using the switch, one may achieve useful communication. Finally, we discuss how the introduction of noise in the quantum switch can reduce the advantage that the switch provides.

\end{abstract}

\maketitle
\section{Introduction}
Perfect communication in the quantum world is impossible as inevitable noise gets introduced in the communication channel due to the interaction of any quantum system with its environments. It is, therefore, important to explore the possibility of better communication using such imperfect quantum channels.

The idea of a `quantum switch'  \cite{Chi13} is a process -- used in the context of indefinite causal order -- through which superposition of concatenations of two quantum channels, taken in two different orders, is realized. The superposition is created with the help of the coupling to a two-level ancilla (called the control qubit). In this way, the switch creates an \textit{indefinite causal order}, between two channels \cite{costa12}.
Quantum switch has several applications in different information-theoretic and thermodynamics tasks. More specifically, quantum switch has been used to test properties of quantum channels \cite{Chiribella-perf-disc,Costa-comput}, to reduce quantum
communication complexity \cite{Brukner-complex}, to boost the precision of quantum metrology \cite{Zhao20}, and to achieve thermodynamics advantages \cite{Guha20} etc. Recently, it was shown that using the quantum switch on a specific noisy channel, along with controlled operations, one can achieve perfect communication \cite{chiribella_perfect}. However, that example of the channel is unique up to unitary transformations.
\\
{In this context, one may raise the following pertinent questions: (1) If a useless channel (i.e. a channel which does not perform useful communication for a \emph{given} information-theoretic task) does not provide improvement in communication even after the use of quantum switch then how to make those channel useful -(for that particular information theoretic task) under the action of quantum switch? (2) How much improvement can be achieved by applying quantum switch on useless quantum channels for \emph{different} information-theoretic tasks?

 The present work is in the direction of providing answers to some of these questions (specially in Sec. \ref{subsec:conc_ch} and Sec. \ref{subsec:Advantages_information} respectively).} The advantages of using the quantum switch in quantum communications have been studied before either in the context of certain scenarios like teleportation \citep{Pati20} or communications via several noisy channels \citep{Mcl20,Proc19,Saz21}. In Sec. \ref{subsec:Advantages_information} of this work, we discuss whether a quantum switch can improve the capability of a quantum channel to transfer quantum resources like entanglement, steerability, coherence, etc.
It depends on the information-theoretic tasks whether a channel is useless for that particular task. Now consider the case where such a useless channel is unavoidable for communication due to some constraints e.g., spatial distances, type of interaction with the environment, etc. Now, it is possible that such a useless channel remains useless \emph{even after} the use of quantum switch on it. We call such channels as a \emph{completely useless} channels. Now, the question is whether there exists any process which makes these channels useful under the action of quantum switch. In Sec. \ref{subsec:conc_ch}, we show that a possible answer to that question is the concatenation of a quantum channel with that useless channel. More specifically, we show that if at first such a completely useless channel is concatenated with another quantum channel and after that, the switch is used on the resulting channel then useful communication may be achieved.
\\
It has been found that  using the superposition of trajectories can provide advantages over a single trajectory. Specifically, it has been shown that the so-called superposition of direct pure processes (SDPPs) \cite{Abbot2020,PRA-Rubino} can also provide perfect communication under a noisy channel. These components of the superposition of the trajectories need not have two different causal orders like what we get from a quantum switch, thus making the quantum switch a subset of SDPPs. It has therefore been argued that the advantages of the quantum switch should be attributed to the coherence in the path degree of freedom and not to the indefinite causal order. In these cases, however, the advantages arise due to the use of the external degree  (or the path degree) of freedom to carry information \citep{shannon, resource}. In the case of the quantum switch, however, the information is not carried in the path degree of freedom. Recently, the effect of these more generalized superposition of trajectories on the four specific noisy qubit quantum channels has been also studied in \cite{PRR-Rubino} in the context of quantum capacity. However, the quantum switch is placed in a distinct category of its own as uses the indefinite causal order to gain advantages in communication. For this reason, we will restrict ourselves to the study of the improvements in communications under the action of the quantum switch compared to when the quantum switch is not available or used. We will not bring other processes like SDPP in this paper as done in some other works \citep{Pati20, Proc19, Saz21, Mcl20}.
\\
The rest of the paper is organized as follows. In Sec. \ref{sec:prelims}, we discuss some preliminaries. In Sec. \ref{sec:main}, we discuss our main results. {In particular, in Subsec. \ref{subsec:transf-en}, we argue that it is important to study the effect of quantum switch considering one output branch at a time. In Subsec. \ref{subsec:conc_ch}, we discuss that a completely useless channel can be converted into a useful channel through the concatenation of it with another quantum channel and after that the use of switch on the resulting channel.} In Subsec. \ref{subsec:Advantages_information}, we discuss how communication improvement helps to get the advantage in some of the well-known information-theoretic tasks. In Subsec.\ref{subsec:noisy switch}, we discuss the fact that the aforesaid improvement in communication can decrease if the switch is noisy. Finally, in Sec. \ref{sec:conc}, we summarise our results and discuss some future directions. 
\section{Preliminaries}\label{sec:prelims}
In this section, we discuss the preliminaries, to be used in the latter sections of the present paper.
\subsection{Observables and their compatibility}
An observable $A$, acting on Hilbert space $\cH$ of dimension $d$, is defined as a set of positive semi-definite matrices $\{A(x)\}$ on $\cH$ such that $\sum_x A(x)=\Id$. We denote the outcome set of $A$ as $\Omega_A$ and therefore, in the above, $x \in {\Omega}_A$. If all $A(x)$'s are projectors then $A$ is a PVM and otherwise $A$ is a POVM. Now two observables $A$ and $B$ are said to be compatible if there exists a joint observable $\cG=\cG(x,y)$ with outcome set $\Omega_A\times\Omega_B$ such that 
\begin{align}
A(y)=\sum_y\cG(x,y);B(y)=\sum_x\cG(x,y)
\end{align}
 hold for all $x\in \Omega_A$ and $y\in\Omega_B$ \citep{Hzi12,Hein08,Hein16}. The above definition is equivalent to the statement that two observables $A_1$ and $A_2$ are compatible if there exists a probability distribution $P(x_i|i,\lambda)$ and an observable $\cJ=\{J_{\lambda}\}_{\lambda\in\Omega_{\cJ}}$ with outcome set ${\Omega}_J$, such that $A_i(x_i)=\sum_{\lambda}P(x_i|i,\lambda)J_{\lambda}$ for all $i\in\{1,2\}$. Implementation of the joint observable allows simultaneous implementation of both observables. If such joint observable does not exist for a set of observables, then the set is incompatible. Commutativity of observables always imply compatibility, but compatibility of observables implies commutativity only for PVMs.

\subsection{Quantum channels}
Let us denote the state space on Hilbert space $\cH$ as  $\cS(\cH)$. Quantum channels are the CPTP maps from $\Lambda:\cS(\cH_1)\rightarrow\cS(\cH_2)$ that satisfies the equation $\Lambda(\sum_ip_i\rho_i)=\sum_ip_i\Lambda(\rho_i)$ for any arbitrary probability distribution $\{p_i\}$ and for any arbitrary set of states $\{\rho_i\in\cS(\cH_1)\}$ \citep{Nielson}. Let $\cL(\cH_2)$ be the set of all bounded linear operators acting on the Hilbert space $\cH$. A CP unital map $\Lambda^*:\cL(\cH_2)\rightarrow\cL(\cH_1)$ is the dual channel of $\Lambda$ if $\tr[\Lambda(\rho) A(x)]=\tr[\rho\Lambda^*(A(x))]$ for all $x\in\Omega_A$ and  for all $\rho\in \cS(\cH_1)$ with $A = \{A(x)\}_{ x \in {\Omega}_A}$ being an arbitrary observable acting on ${\cal H}_2$. It is well known that any quantum channel $\Lambda$ admits Krauss representation such that $\Lambda(\rho)=\sum_x \cK_x\rho\cK^{\dagger}_x$ where $\sum_x\cK^{\dagger}_x\cK_x=\Id$. $\cK$'s are the Krauss operators of $\Lambda$. A channel is called unital if it keeps the maximally mixed state unchanged.
We know that any qubit state can be represented by a three component real vector $\vec{a} = (a_1, a_2, a_3)$ as 
$$
\rho=\dfrac{1}{2}(I+\vec{a}.\vec{\sigma})
$$ 
where $\vec{\sigma}=(\sigma_x,\sigma_y,\sigma_z)^T$ are the Pauli matrices and $|\vec{a}| \leq 1$. Any map from a qubit state to a qubit state can therefore be represented by a linear map $T$ such that 
\begin{equation}
\begin{aligned}
	\Gamma(\rho)=&\dfrac{1}{2}(I+\vec{a}'.\vec{\sigma})\\
	=&\dfrac{1}{2}(I+(T\vec{a}+\vec{t}).\vec{\sigma})
\end{aligned}
\end{equation}
This map $\Gamma$ is generally represented in form of a matrix (known as $T$-matrix):
\begin{equation}
\begin{aligned}
\mathcal{T}_{\Gamma}=\begin{pmatrix}
1 & 0 \\
\vec{t} & T \\
\end{pmatrix}
\end{aligned}
\end{equation}
where for  given any qubit state $\rho$ represented as a column vector $v_{\rho}=(1,a_1,a_2,a_3)^T$, the action of the map is given by a matrix multiplication $v_{\Gamma(\rho)}=\mathcal{T}_{\Gamma}.v_{\rho}$ \cite{Hzi12}. The qubit map is unital when the vector $\vec{t}=0$. We should mention here that we index the components of $v_{\rho}$ from $0$ to $3$.
\subsection{Special types of quantum channels}\label{subsec:special-type}
\subsubsection{Depolarizing channels}
A quantum depolarizing channel $\Gamma^t_d:\cS(\cH)\rightarrow\cS(\cH)$ is a specific type of quantum channel which has following form:
\begin{align}
\Gamma^t_d(\rho)=t\rho+(1-t)\frac{\Id}{d} \label{depchannel}
\end{align}
where $-\frac{1}{d^2-1}\leq t\leq 1$ and $d$ is the dimension of the Hilbert space ${\cal H}$. For qubit depolarizing channels, a set of krauss operators is$\{\sqrt{1-p}\Id,\sqrt{\frac{p}{3}}\sigma_x, \sqrt{\frac{p}{3}}\sigma_y, \sqrt{\frac{p}{3}}\sigma_z\}$ . Here, $t=1-\frac{4p}{3}$.  Clearly, quantum depolarizing channels are unital.
\subsubsection{Entanglement breaking channels}
A quantum channel $\Lambda^E:\cS(\cH_B)\rightarrow\cS(\cK)$ is called an entanglement breaking channel (EBC) if for all $\rho_{AB}\in\cS(\cH_A\otimes\cH_B)$, $(\cI\otimes\Lambda^E)(\rho_{AB})$ is a separable state (irrespective of the dimension of ${\cal H}_A)$ \citep{Hor03,Rus03}. ${\Lambda}^E$ is an EBC iff its Choi matrix is separable. Entanglement breaking channels form a convex set.  Any EBC $\Lambda^E$ has the following form:
\begin{equation}
\Lambda^E(\rho)=\sum_x\rho_x\tr[\rho A(x)]
\end{equation}
where $A=\{A(x)\}$ is a POVM acting on $\cH_B$ and ${\rho}_x \in {\cal S}({\cal K})$. $\Gamma^t_2$ (given in equation \eqref{depchannel}) is EBC iff $\mid t\mid\leq \frac{1}{3}$. We denote set of all EBCs acting on the state space $\cS(\cH)$ (where $d$ dimensional Hilbert space $\cH$) as $\cC^d_{EBC}$. It is well known that $\Phi\circ\Lambda^{EBC}$ is an $EBC$ for any quantum channel $\Phi$. If Choi matrix of a CP linear map is separable then we will call it an EB CP map.
\subsubsection{Incompatibility breaking channels}
A quantum channel $\Lambda^{n-IBC}$ is an $n$-incompatibility breaking channel($n$-IBC) if for an arbitrary set of $n$ observables $\{A_1,.....,A_n\}$, the set $\{(\Lambda^{n-IBC})^*(A_1),.....,(\Lambda^{n-IBC})^*(A_n)\}$ is compatible \cite{Heino_IBC}. The set of all $n-IBC$'s form a convex set. $\Gamma^t_d$ is $n-IBC$ for $t\leq \frac{n+d}{n(1+d)}$ \cite{Heino_IBC}. We denote the set of all $n-IBC$s acting on the state space $\cS(\cH)$ (where $d$ dimensional Hilbert space $\cH$) as $\cC^d_{n-IBC}$. It is well known that $\Phi\circ\Lambda^{n-IBC}$ is a $n-IBC$ for any quantum channel $\Phi$.\\
A quantum channel $\Lambda^{IBC}$ is IBC if it is $n-IBC$ for all $n$ \cite{Heino_IBC}. $\Gamma^t_d$ is $IBC$ for $t\leq \frac{(3d-1)(d-1)^{(d-1)}}{d^d(d+1)}$ \cite{Heino_IBC}. We denote set of all IBCs acting on the state space $\cS(\cH)$ as $\cC^d_{IBC}$. It is well known that $\cC^d_{IBC}\subseteq.....\subseteq \cC^d_{(n+1)-IBC}\subseteq \cC^d_{n-IBC}\subseteq .....\subseteq \cC^d_{2-IBC}$.  It is also well known that $\cC^d_{EBC}\subset\cC^d_{IBC}$. It is also well known that $\Phi\circ\Lambda^{IBC}$ is an $IBC$ for any quantum channel $\Phi$.
\subsection{Quantum switch}
A causal order is a partial ordering of events in the space-time. In our case the \textit{events} will correspond to quantum operations which are applied on a given system at two different times and places. Let a  quantum system be sent from Alice to Bob through a quantum channel. The quantum channel is composed of two channels $\Lambda$ and $\Phi$. Conventional understanding dictates that resulting channel can be either of the form $\Lambda\circ \Phi$ or $\Phi\circ \Lambda$ or a classical mixture of the two orders, i.e. $p\Lambda\circ \Phi+(1-p)\Phi\circ \Lambda$ with $0\leq p\leq 1$. Causal ordering between two events is therefore the specification of which event effects the other. However, Oreshkov et al \cite{costa12} have shown that by using the property of quantum superposition one can construct processes wherein the order of events cannot be specified. As a result, if Alice has at her disposal two quantum channels $\Lambda$ and $\Phi$, she can construct a third channel which cannot be seen as a simple composition of the two given channels. This channel can be seen as a superposition of the two orderings of the initial two channels $\Lambda$ and $\Phi$ and therefore has the \textit{indefinite} causal order between the two channels.

Quantum switch was introduced \citep{Chi13,Moqanaki-15,robino17,Gia18} as a quantum circuit that implements the indefinite causal order between two different operations. The basic ingredient for such a processes is a two level ancilla systems  (We call it control qubit) with the state $\omega$. The signalling between the two events depends on the state of the ancilla and using the fact that the ancilla can be in a superposition of more than one state, the order of the two events can be in a superposition of the two scenarios: $\Lambda\circ\Phi$ or $\Phi\circ\Lambda$.\color{black}  We start with the product state $\rho\otimes\omega$ where $\rho$ is the state of the system on which these operators ({\it i.e.} channels) act and $\omega$ is a state of control qubit. Given two quantum channels $\Lambda$ and $\Phi$ the quantum switch produces an output $(\Lambda\circ \Phi)(\rho)$ if $\omega=\ket{0}\bra{0}$ and $(\Phi\circ \Lambda)(\rho)$ if $\omega=\ket{1}\bra{1}$. If $\omega$ is in a superposition of states $\ket{0}$ and $\ket{1}$, the resultant operation on $\rho$ will be in a superposition of $(\Lambda\circ \Phi)(\rho)$ and $(\Phi\circ \Lambda)(\rho)$ thus mimicking the indefinite causal order between the two operations,  (see fig. \ref{fig:switch})
\\
\begin{figure}[hbt!]
\includegraphics[scale=.12]{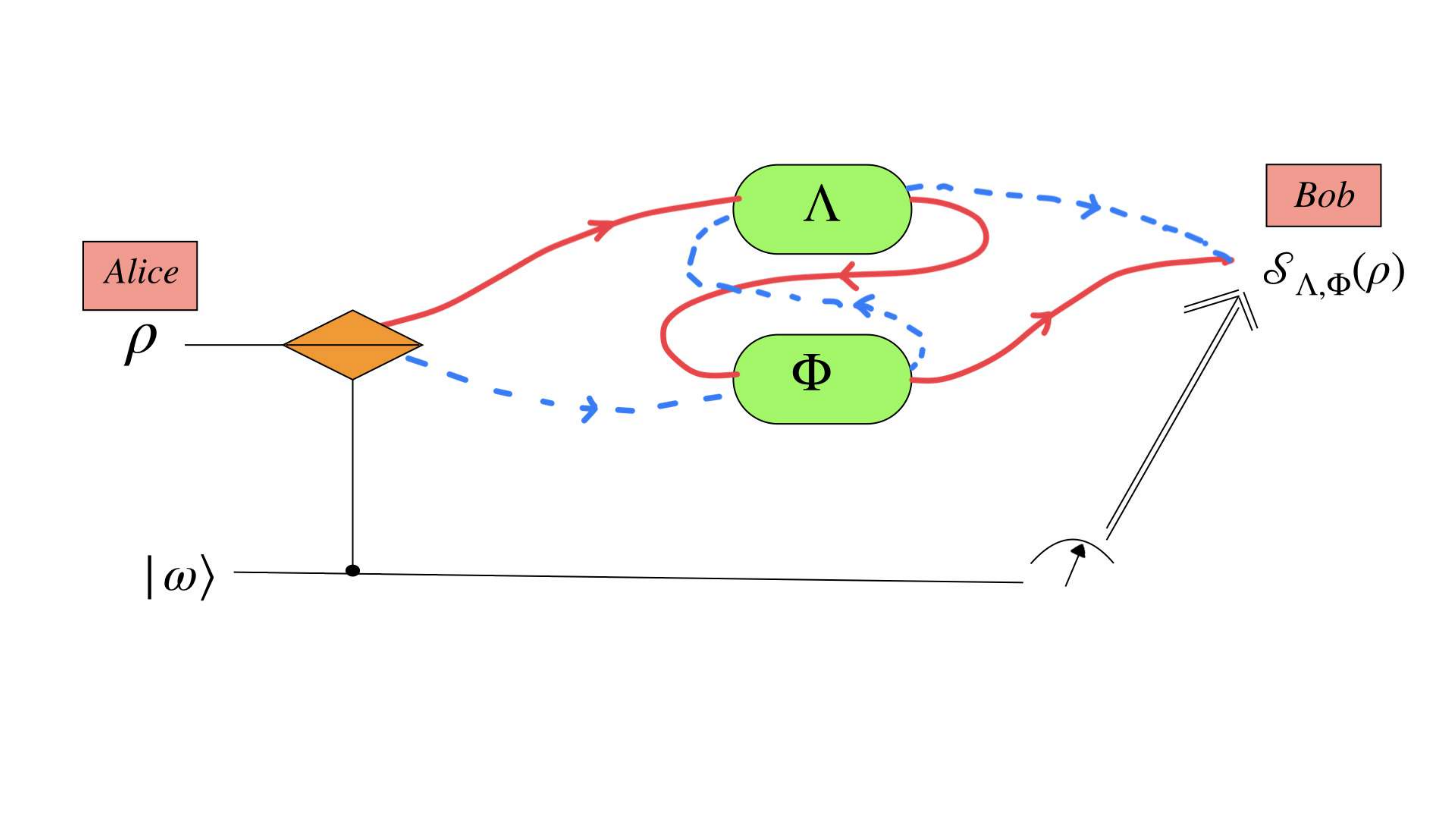}
\caption{{A quantum switch determines the order of two quantum operations acting on a state $\rho$ conditional on the state of the system $\omega$. The two causal orders are denoted by the red (bold) and blue (dotted) lines. If the state $\omega$ is in a superposition of the two basis states the resulting channel will also be in a superposition of two causal orders}.}
\label{fig:switch}
\end{figure}
\\ \\
Let the set of Krauss operators of channels $\Lambda$ and $\Phi$ be $\{\mathcal{A}_x\}$ and $\{\mathcal{B}_x\}$ respectively. The Krauss operators for the Switch are given by:
\begin{equation}
\begin{aligned}
\mathsf{S}_{x,y}=\mathcal{A}_x\mathcal{B}_y\otimes\ket{0}\bra{0}+\mathcal{B}_y\mathcal{A}_x\otimes\ket{1}\bra{1}
\end{aligned}
\end{equation}
Using $\ket{0}\bra{0}=(\sigma_z+I)/2$ and $\ket{1}\bra{1}=(I-\sigma_z)/2$, the above Krauss operators take the following form
\begin{equation}
\begin{aligned}
	\mathsf{S}_{x,y}=\dfrac{1}{2}(\mathcal{A}_x\mathcal{B}_y+\mathcal{B}_y\mathcal{A}_x)\otimes I+\dfrac{1}{2}(\mathcal{A}_x\mathcal{B}_y-\mathcal{B}_y\mathcal{A}_x)\otimes\sigma_z
	\end{aligned}
\end{equation}
If we input identical channels into the switch, that is $\Lambda=\Phi$, the resultant channel has the following form:
\begin{align}
	&\mathsf{S}_{\Lambda,\Lambda,\omega}(\rho)\nonumber\\
	&=\dfrac{1}{4}\sum_{x,y}\Big(\{\mathcal{A}_x,\mathcal{A}_y\}\rho\{\mathcal{A}_x,\mathcal{A}_y\}^{\dagger}\otimes\omega\nonumber\\
	& +[\mathcal{A}_x,\mathcal{A}_y]\rho[\mathcal{A}_x,\mathcal{A}_y]^{\dagger}\otimes\sigma_z\omega\sigma_z\Big)\nonumber\\
	&\equiv C_{+,\Lambda,\Lambda}(\rho)\otimes\omega+C_{-,\Lambda,\Lambda}(\rho)\otimes\sigma_z\omega\sigma_z
	\label{switched_channel}
	\end{align}
 
In general, $C_{+,\Lambda,\Lambda}$ and $C_{-,\Lambda,\Lambda}$ are CP maps, but are not trace preserving and hence are not quantum channels. We will occasionally refer to these branches as the $C_+$ and $C_-$ branches respectively for a general quantum channel $\Lambda$. As given in \cite{chiribella_perfect}, for a qubit channel of the form $\Lambda=\Phi=\sum_{\mu=0}^3p_\mu\sigma_\mu\rho\sigma_\mu$ (i.e., for Pauli channels), the two branches $C_{+,\Lambda_A,\Lambda_A}(\rho)$ and $C_{-,\Lambda,\Lambda}(\rho)$  become $C_{+,\Lambda,\Lambda}(\rho)=q\bar{C}_{\Lambda,+}(\rho)$ and $C_{-,\Lambda,\Lambda}(\rho)=(1-q)\bar{C}_{\Lambda,-}(\rho)$ where $0\le q\le 1$  and $\bar{C}_{\Lambda,+}(\rho)$ and $\bar{C}_{\Lambda,-}(\rho)$ are quantum channels. The exact forms of $\bar{C}_{\Lambda,+}(\rho)$ and $\bar{C}_{\Lambda,-}(\rho)$ are given by

\begin{align}
	\bar{C}_{\Lambda,+}(\rho)=&[(p_0^2+p_1^2+p_2^2+p_3^2)\rho\nonumber\\
	&+2p_0(p_1 \sigma_x\rho\sigma_x + p_2 \sigma_y\rho\sigma_y + p_3\sigma_z\rho\sigma_z)]/q,\label{1st_branch_random}\\[.2cm]
	\bar{C}_{\Lambda,-}(\rho)=&[2p_1 p_2 \sigma_z\rho\sigma_z + 2p_2 p_3 \sigma_x\rho\sigma_x\nonumber\\
	& \hspace{2.5cm}+ 2p_1 p_3 \sigma_y\rho\sigma_y]/(1-q)\label{2nd_branch_random}\\[.2cm]
	&\text{where}\hspace{.5cm} q=1-2(p_1p_2+p_2p_3+p_3p_1).
\end{align}
If $\Lambda$ is a Pauli channel, sometimes, we write $\bar{C}_{\Lambda,\pm}$ as $\bar{C}_{\pm}$ where there is no confusion. One can therefore interpret the action of the switch on a Pauli channel as a convex combination of two channels $\bar{C}_{\Lambda,+}$ and $\bar{C}_{\Lambda,-}$ with $q$ interpreted as the probability of getting the $\bar{C}_{\Lambda,+}\otimes \omega$ state. 

When we say ``the action of quantum switch on the quantum channel $\Lambda$", we mean the circuit described in Fig. \ref{fig:switch} with $\Phi=\Lambda$.
\\
Quantum switch can be generalized with the help of an $N$-level control system. Using this $N$-level control system, one can superpose the different causal orders of more than two channels \cite{Proc19} and get an advantage in transfer of information. However, this advantage does not seem to scale well with the increase in quantum resources required to create the coherence in the higher level control system. From this perspective, such a generalization of the quantum switch does not seem to lead to a drastically more efficient result. For this reason, we restrict ourselves to a $2$-level control system in this work.

\section{Main Results}\label{sec:main}
Suppose, Alice has system prepared in a bipartite qubit maximally entangled state. She wants to send one part to Bob so that these shared bipartite entangled states can be used later in information-theoretic tasks. But Alice does not have control over the environment and the only channels available to her are  EBCs. Now if Alice sends these states through an EBC, the resulting bipartite state will no longer be an entangled state and therefore cannot be used for later information-theoretic tasks which require entanglement. Similarly, if Alice has a set of IBCs, she can transfer some entanglement, but those entangled states will not be useful to demonstrate information-theoretic tasks like- quantum steering (it will be discussed later in Sec. \ref{subsubsec:adv-steer} ) or Bell non-locality. By ``transfer of entanglement" we mean the transfer of one part of an entangled bipartite state.  This phenomenon is also known as sharing/distribution of entanglement. If the channel used to transfer this part is an EBC, the transfer of entanglement is not possible. As mentioned before, Ref. \cite{chiribella_perfect} shows that the quantum switch can facilitate a perfect transfer of state if an entanglement breaking Pauli channel of the form $\Lambda(\rho)=(\sigma_x\rho\sigma_x+\sigma_y\rho\sigma_y)/2$ is used. This is an example of the use of the quantum switch in preserving a quantum resource even when the only operation available is a resource-destroying map. But unfortunately, such a channel is unique up to a unitary equivalence \cite{chiribella_perfect}. Therefore the question arises: to what extent the quantum switch can be used to provide an advantage in the transfer of quantum resources when we only have access to the resource-destroying maps? Here, we focus mainly on the transfer of entanglement, steerability, and coherence from Alice to Bob using a quantum switch and an EBC, an IBC, and a coherence-breaking channel respectively. We also focus on how communication using quantum switch provide an advantage in $(n,d)-$quantum random access codes (it will be discussed in Sec. \ref{subsubsec:adv-QRAC}) when $n-$IBCs are the only available channels for Alice to communicate with Bob. We restrict ourselves to the switch operation on two identical quantum channels throughout the paper. We will show that the advantage of the switch over a single use of the channel $\Lambda$ is sometimes constrained to only one of the branches (either $C_{\Lambda,+}$ or $C_{\Lambda,+}$), in which case one can post-select that particular branch. In some cases, however, the advantage of switch manifests in both branches resulting in a deterministic advantage.
\\
 We would like to mention here that throughout this paper we repeatedly use the term ``useless channels" which is context-dependent e.g., EBCs are useless in the context of entanglement transfer and IBCs are useless in the context of QRAC or quantum steering (it will be discussed later in the relevant sections). The channels which are not useless, depending on the context, are considered here to be useful.
\subsection{Transfer of entanglement through EBC using quantum switch}\label{subsec:transf-en}
In this section, we present the advantage of using quantum switch on EBCs. As we have seen in equation \eqref{switched_channel}, when the same channel is used in the switch operation, the resulting operation may be seen as a superposition of two CP maps $C_{+,\Lambda_A,\Lambda_A}$ and $C_{-,\Lambda_A,\Lambda_A}$. We make the following theorem regarding these two branches of the switch:
\begin{theorem}
Assume that Alice has a quantum channel ${\Lambda}_A$ to send a quantum state. If the communication is supported by a quantum switch, i.e. instead of $\Lambda_A$ Alice uses $\mathsf{S}_{\Lambda_A,\Lambda_A,\omega}$ as given in equation \eqref{switched_channel}, then the following will hold good (in the context of quantum switch)-\\

(a) If both the branches $C_{+, {\Lambda}_A, {\Lambda}_A}$ and $C_{-, {\Lambda}_A, {\Lambda}_A}$ are EB CP maps, there does not exist any quantum measurement based control operation which can make the final channel (after tracing out the control part) a non-EBC.

(b) In case, $\Lambda_A$ is a Pauli channel, if both the branches $\bar{C}_{ {\Lambda}_A,+}$ and $\bar{C}_{{\Lambda}_A,+}$ are IBCs, there does not exist any quantum measurement based controlled operation which can make the final channel (after tracing out the control part) a non-IBC.\\
\end{theorem}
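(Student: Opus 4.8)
The plan is to reduce both parts to the structural facts recalled in Sec.~\ref{subsec:special-type}: that $\cC^d_{EBC}$ and $\cC^d_{IBC}$ are convex sets and that they are stable under postcomposition with an arbitrary channel, i.e. $\Phi\circ\Lambda^{EBC}$ is EBC and $\Phi\circ\Lambda^{IBC}$ is IBC. First I would pin down the most general measurement-based control operation as a POVM $\{E_k\}$ measured on the control qubit together with channels $\{\Phi_k\}$ applied to the system conditioned on the outcome $k$. Using the switched-channel form \eqref{switched_channel}, the joint system--control state is $C_{+,\Lambda_A,\Lambda_A}(\rho)\otimes\omega+C_{-,\Lambda_A,\Lambda_A}(\rho)\otimes\sigma_z\omega\sigma_z$; applying the control operation and tracing out the control register yields
\[
\Lambda_{\mathrm{fin}}(\rho)=\sum_k\!\Big(\tr[E_k\omega]\,(\Phi_k\circ C_{+,\Lambda_A,\Lambda_A})(\rho)+\tr[E_k\sigma_z\omega\sigma_z]\,(\Phi_k\circ C_{-,\Lambda_A,\Lambda_A})(\rho)\Big),
\]
with all coefficients $\tr[E_k\omega],\,\tr[E_k\sigma_z\omega\sigma_z]\ge 0$. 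The whole content of the theorem is that this expression can never leave the relevant set.

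For (a), each composition $\Phi_k\circ C_{\pm,\Lambda_A,\Lambda_A}$ is an EB CP map: an EB CP map has a measure-and-prepare form $C_{\pm}(\rho)=\sum_x\sigma_x\tr[\rho A(x)]$, and postcomposing by the channel $\Phi_k$ gives $\sum_x\Phi_k(\sigma_x)\tr[\rho A(x)]$, which is again of measure-and-prepare type and hence has separable Choi matrix. Therefore $\Lambda_{\mathrm{fin}}$ is a nonnegative sum of EB CP maps, so its Choi matrix is a sum of separable positive operators and is itself separable; being trace preserving, $\Lambda_{\mathrm{fin}}$ is a genuine EBC. Thus no measurement-based control can turn the final channel into a non-EBC.

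For (b), I specialise to a Pauli channel and substitute $C_{+,\Lambda_A,\Lambda_A}=q\,\bar C_{\Lambda_A,+}$ and $C_{-,\Lambda_A,\Lambda_A}=(1-q)\,\bar C_{\Lambda_A,-}$ into the same expression. Because $\sum_k E_k=\Id$ and $\omega,\sigma_z\omega\sigma_z$ are states, the total weight is $q\,\tr\omega+(1-q)\,\tr(\sigma_z\omega\sigma_z)=1$, so $\Lambda_{\mathrm{fin}}$ is a genuine convex combination of the channels $\Phi_k\circ\bar C_{\Lambda_A,+}$ and $\Phi_k\circ\bar C_{\Lambda_A,-}$. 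Each of these is IBC by postcomposition-stability of IBC, and by convexity of $\cC^d_{IBC}$ the combination is again IBC; the same argument applies outcome-by-outcome, so even post-selecting a single measurement result leaves an IBC. Hence no measurement-based control yields a non-IBC.

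I expect the only real obstacle to be the correct formalisation of ``quantum measurement based control operation'' and the verification that discarding the control register produces exactly the above (nonnegative, resp. convex) combination of the compositions $\Phi_k\circ C_{\pm,\Lambda_A,\Lambda_A}$; once the final channel is written in this form, the convexity and postcomposition-stability of $\cC^d_{EBC}$ and $\cC^d_{IBC}$ close both parts immediately. The minor bookkeeping points to be careful about are that in (a) a trace-preserving EB CP map is a bona fide EBC (not merely an EB CP map), and that in (b) the weights really do sum to one, so the combination is convex rather than only nonnegative.
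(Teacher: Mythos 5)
Your proposal is correct and follows essentially the same route as the paper's own proof: write the post-control, post-trace channel as a nonnegative combination of the compositions of arbitrary channels with the branches $C_{\pm,\Lambda_A,\Lambda_A}$, then invoke stability of EBC/IBC under postcomposition together with convexity (and trace preservation) to conclude. The only differences are cosmetic -- you work with a general POVM $\{E_k\}$ from the outset, which the paper treats as the ``straightforward'' generalisation of its two-outcome projective case, and you spell out part (b) (including the check that the weights sum to one via $q$ and $1-q$), which the paper dismisses as ``similar to (a)''.
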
 

\begin{proof}
(a) Suppose after the switch operation a quantum measurement based controlled operation has been implemented which has the form-
$\mathscr{U}(\rho\otimes \omega)=\Lambda_1(\rho)\otimes\bra{a}\omega\ket{a}\ket{a}\bra{a}+\Lambda_2(\rho)\otimes\bra{a^{\perp}}\omega\ket{a^{\perp}}\ket{a^{\perp}}\bra{a^{\perp}}$
where $(|a\rangle, |a^{\bot}\rangle)$ is an arbitrary pair of mutually orthogonal states in a two dimensional Hilbert space and $\Lambda_1$ and $\Lambda_2$ are the arbitrary channels. This operation $\mathscr{U}$ can be implemented by measuring the observable $\cA=\{\ket{a}\bra{a}, \ket{a^{\perp}}\bra{a^{\perp}}\}$ on the state $\omega$ and then implementing the channels $\Lambda_1$ and $\Lambda_2$ on the state $\rho$ after getting outcomes $a$ and $a^{\perp}$ respectively. Then from equation \eqref{switched_channel}, the resulting state (after the switch operation and then implementing $\mathscr{U}$) will be given by
\begin{align}
\mathscr{U} (\mathsf{S}_{\Lambda_A,\Lambda_A}(\rho))=&(\Lambda_1\circ C_{+,\Lambda_A,\Lambda_A})(\rho)\otimes\bra{a}\omega\ket{a}\ket{a}\bra{a}+\nonumber\\
&(\Lambda_1\circ C_{-,\Lambda_A,\Lambda_A})(\rho)\otimes\bra{a}\sigma_z\omega\sigma_z\ket{a}\ket{a}\bra{a}+\nonumber\\
&(\Lambda_2\circ C_{+,\Lambda_A,\Lambda_A})(\rho)\otimes\bra{a^{\perp}}\omega\ket{a^{\perp}}\ket{a^{\perp}}\bra{a^{\perp}}+\nonumber\\
&(\Lambda_2\circ C_{-,\Lambda_A,\Lambda_A})(\rho)\otimes\bra{a^{\perp}}\sigma_z\omega\sigma_z\ket{a^{\perp}}\ket{a^{\perp}}\bra{a^{\perp}}
\end{align}
Therefore, the effective channel on the system after tracing out the control part is 
\begin{align}
\tr_{\cH_c}[\mathscr{U} (\mathsf{S}_{\Lambda_A,\Lambda_A}(\rho))]=&(\Lambda_1\circ C_{+,\Lambda_A,\Lambda_A})(\rho)\bra{a}\omega\ket{a}+\nonumber\\
&(\Lambda_1\circ C_{-,\Lambda_A,\Lambda_A})(\rho)\bra{a}\sigma_z\omega\sigma_z\ket{a}+\nonumber\\
&(\Lambda_2\circ C_{+,\Lambda_A,\Lambda_A})(\rho)\bra{a^{\perp}}\omega\ket{a^{\perp}}+\nonumber\\
&(\Lambda_2\circ C_{-,\Lambda_A,\Lambda_A})(\rho)\bra{a^{\perp}}\sigma_z\omega\sigma_z\ket{a^{\perp}}
\end{align}
where $\cH_c$ is the Hilbert space of control qubit.

Since, $C_{+,\Lambda_A,\Lambda_A}$ and $C_{-,\Lambda_A,\Lambda_A}$ are EB CP maps, concatenation of any quantum channel with these CP maps provide EB CP maps. Since the linear combination maps (with non-negative coefficients) of EB CP maps is an EB CP map, we have proved that $\tr_{\cH_c}[\cU (\mathsf{S}_{\Lambda_A,\Lambda_A,\omega})]$ is an EBC (since it is also trace preserving).\\
Generalisation of this proof to POVM based control operations is straightforward.

(b) The proof of (b) is similar to that of (a).
\end{proof}
Therefore, the use of quantum switch will not give any significant benefit in communication for above-said channels.

Now, suppose one of the branches, say for example $C_{+,\Lambda_A,\Lambda_A}$ is not an EB CP, even then it is possible that there does not exist any quantum controlled operation that can make the effective channel non-EBC (since the other branch is EB CP). But in that case Alice can improve communication probabilistically. In fact, in that case, Alice can perform measurements on the control qubit (if the state of the control qubit $\omega=\ket{+}\bra{+}$ where $\ket{+}=\frac{\ket{0}+\ket{1}}{\sqrt{2}}$ and $\{\ket{0},\ket{1}\}$ is the eigen basis of $\sigma_z$ then Alice can perform measurements in the basis $\{\omega, \omega^{\perp}\}$ where $\omega^{\perp}=\ket{-}\bra{-}$ with $\ket{-}=\frac{\ket{0}-\ket{1}}{2}$) and will record the outcomes. If the outcome corresponds to $\omega$ (corresponding to non-EB CP branch), the resulting post-measurement state will be non-separable (depends on the input entangled state), in general, otherwise she will throw away the output state.
If both branches (i.e., both $C_{+, {\Lambda}_A, {\Lambda}_A}$ and $C_{-, {\Lambda}_A, {\Lambda}_A}$ ) are not EB CP maps then clearly, it is possible to improve communication deterministically. Therefore, the above-said discussion suggests that it is important to study the effect of quantum switch on a quantum channel considering \emph{one individual output branch} (i.e., either $C_{+, {\Lambda}_A, {\Lambda}_A}$ or $C_{-, {\Lambda}_A, {\Lambda}_A}$ ) at a time.

Let us study the aforesaid discussion through the example of any unital qubit entanglement breaking channel. As discussed before, any unital qubit channel corresponds to the $T-matrix=diag(\lambda_1,\lambda_2,\lambda_3)$ upto the action of unitary operators on the input and output spaces. But the actions of unitary operators are nontrivial in the case of quantum switch acting on the channels. We will discuss this fact in later sections. A general qubit unital channel has nine parameters (without unitary equivalence) which is slightly difficult to handle. Therefore, for this subsection, we will restrict ourselves to Pauli Channels. The subset consisting of the Pauli entanglement breaking channels satisfy $\sum_i|\lambda_i|\le 1$ and they form an octahedron in the $\lambda_i$ space: position of a channel, characterized by the set of three parameters $(\lambda_1,\lambda_2,\lambda_3)$, in the three dimensional Euclidean space is given by the coordinates $({\lambda}_1, {\lambda}_2, {\lambda}_2)$. The vertices of the octahedron are: $(+ 1, 0, 0)$, $(- 1, 0, 0)$,  $(0, + 1, 0)$, $(0, - 1, 0)$, $(0, 0, + 1)$, and $(0, 0, - 1)$. The figure \ref{cpuseful} shows the set of all those entanglement breaking Pauli channels that constitute the octahedron. The black region inside the octahedron corresponds to the set of entanglement breaking channels, the $\bar{C}_+$ branch of which -- after the action of the switch -- are being mapped outside the octahedron (shown in the blue color). The three vertices marked as $a,b$ and $c$ correspond to the points 
$(-1, 0, 0)$,$(0, -1, 0)$, $(0, 0, -1)$) respectively. The vertex $(0,0,-1)$ corresponds to the channel $(\sigma_x\rho\sigma_x+\sigma_y\rho\sigma_y)/2$ and the other two vertices are the unitary equivalents of this channel. These channels are mapped (under the action of the $\bar{C}_+$ branch of the quantum switch) onto the identity channel (corresponding to the point $(+ 1, + 1, + 1)$).
\begin{figure}[hbt!]
\centering
\begin{subfigure}{0.5\textwidth}
\centering
    \includegraphics[width=0.9\linewidth]{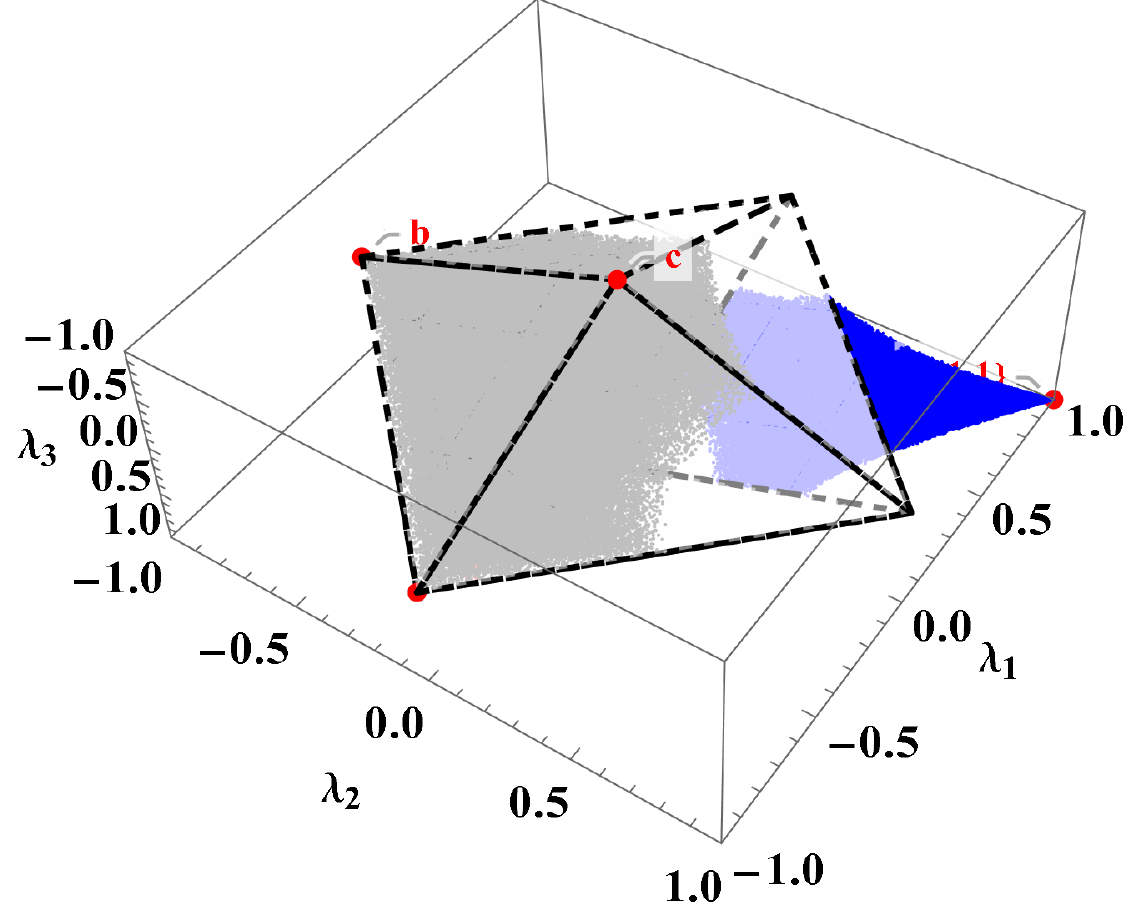}
    \caption{}
    \label{fig:Pauli-oct-c+.1}
\end{subfigure}%
\\
\begin{subfigure}{0.5\textwidth}
\centering
    \includegraphics[width=1.0\linewidth]{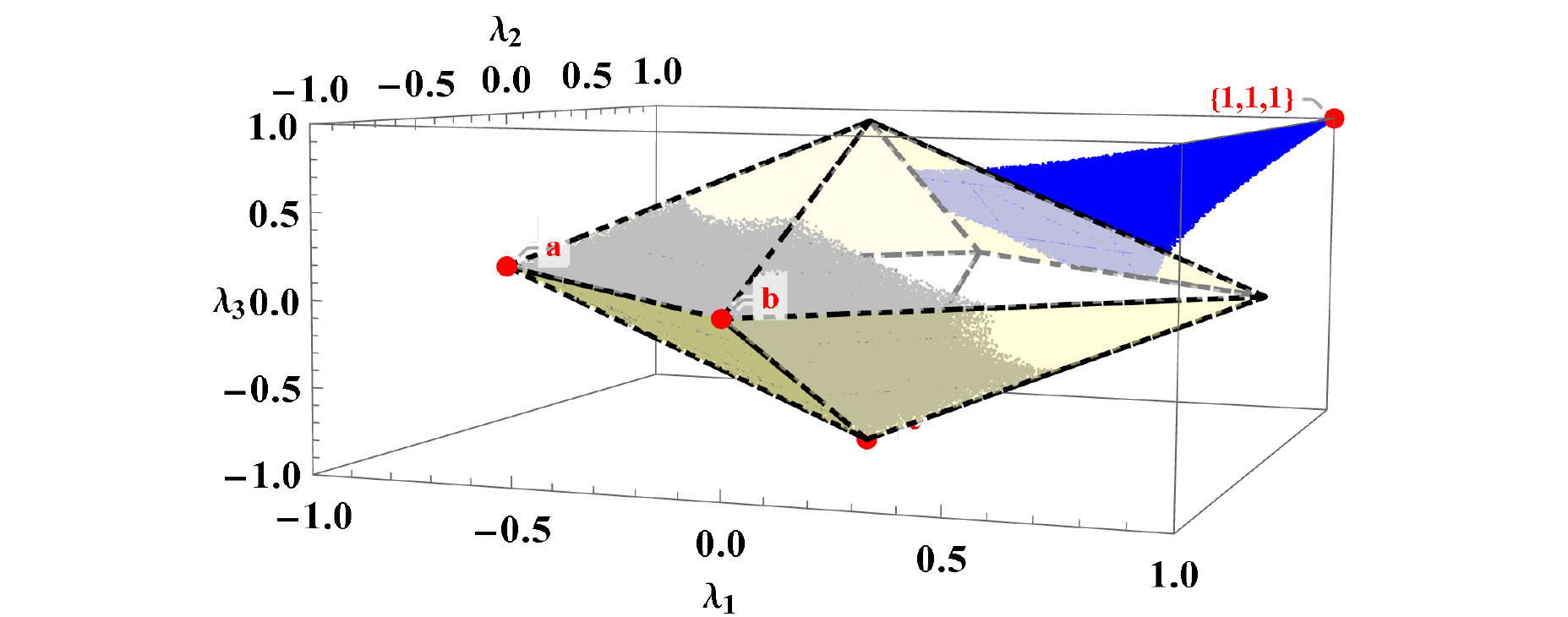}
    \caption{}
    \label{fig:Pauli-oct-c+.2}
\end{subfigure}
\caption{The 3-D plots show the octahedron of the entanglement breaking channels from two perspectives. The channels whose parameters $(\lambda_1,\lambda_2,\lambda_3)$ lie within the octahedron are entanglement breaking. The grey region inside the octahedron are the EBCs which, under the action of the $\bar{C}_+$, branch become non-EBC. The blue region is the mapping of these EBCs outside the octahedron. The points $a$, $b$, $c$ are the vertices of the octahedron given by $(-1, 0, 0)$,$(0, -1, 0)$, $(0, 0, -1)$ respectively. These channels are mapped onto the point $(1,1,1)$ corresponding to the identity channel.}
\label{cpuseful}
\end{figure}

\begin{figure}[hbt!]
\centering
\begin{subfigure}{0.5\textwidth}
\centering
    \includegraphics[width=0.85\linewidth]{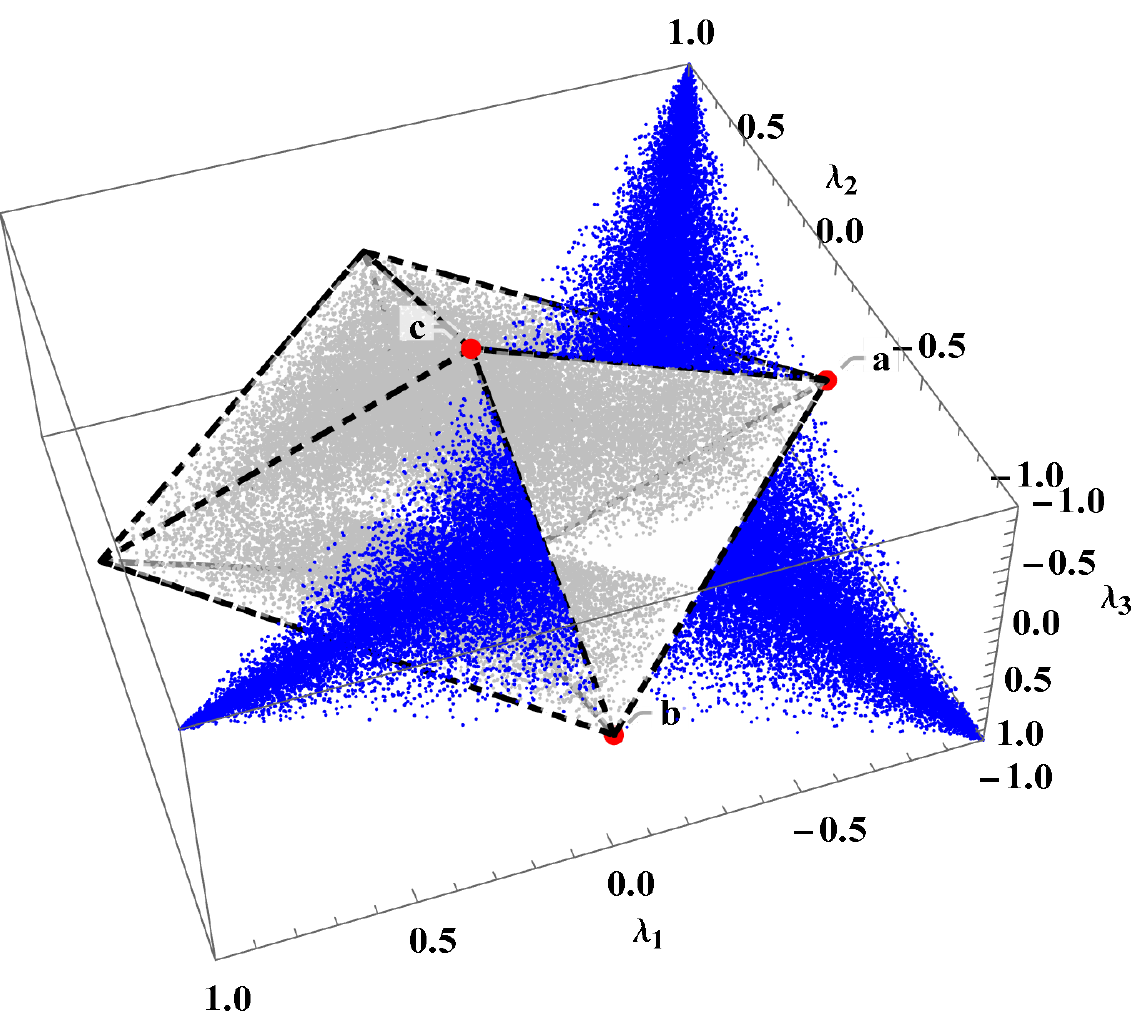}
    \caption{}
    \label{fig:Pauli-oct-c-.1}
\end{subfigure}%
\\
\begin{subfigure}{0.5\textwidth}
\centering
    \includegraphics[width=0.9\linewidth]{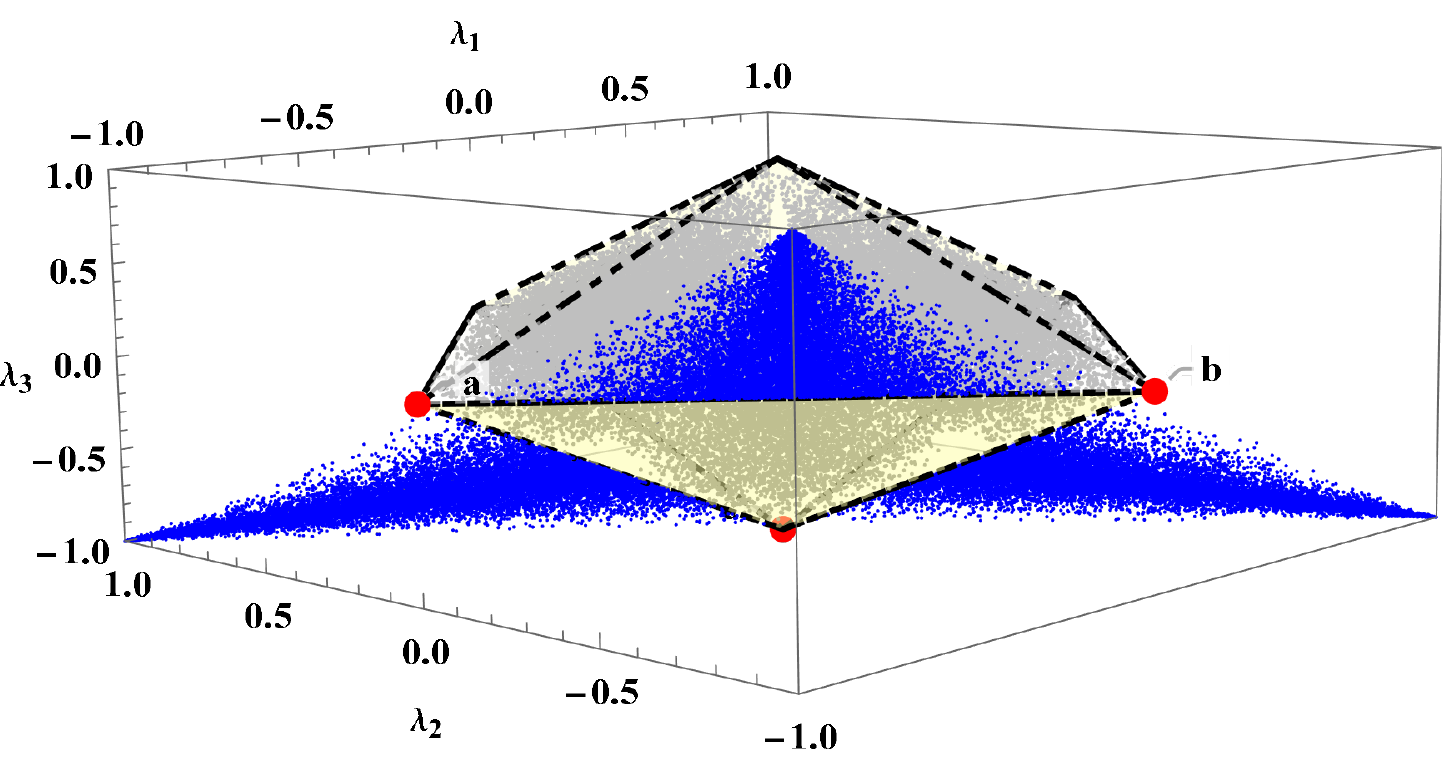}
    \caption{}
    \label{fig:Pauli-oct-c-.2}
\end{subfigure}
\caption{The figure shows (in grey) the EBCs inside the octahedron which are mapped to non-EBCs outside the octahedron under the $\bar{C}_-$ branch. These channels are mapped onto the blue region outside the octahedron which in this case is in a plane of the points $(1,-1,-1)$, $(-1,1,-1)$ and $(1,1,-1)$. These channels are $\sigma_Z\rho\sigma_Z$, $\sigma_Y\rho\sigma_Y$ and $\sigma_X\rho\sigma_X$ respectively. The points $a$, $b$, $c$ are the vertices of the octahedron given by $(-1, 0, 0)$, $(0, -1, 0)$, $(0, 0, -1)$ respectively and are mapped onto the channels $(1,-1,-1)$, $(-1,1,-1)$ and $(1,1,-1)$ respectively.}
\label{cmuseful}
\end{figure}
Figure \ref{cmuseful} shows the EBCs (in grey) that under the $\bar{C}_-$ branch are mapped outside the octahedron (turned into non-EBCs). The mapped channels outside the octahedron are marked in blue. The channels are mapped onto the plane containing the vertices $( 1, - 1, - 1)$, $(- 1, + 1, - 1)$ and $(- 1, - 1,  1)$ excluding the region that overlaps with one of the faces of the octahedron. The previously mentioned points $a$, $b$ and $c$ of the octahedron are now matched to exactly these vertices $( 1, - 1, - 1)$, $(- 1, + 1, - 1)$, $(- 1, - 1,  1)$ respectively under the $\bar{C}_-$ branch. These three points $( 1, - 1, - 1)$, $(- 1, + 1, - 1)$, $(- 1, - 1,  1)$ correspond to the channels $\rho\rightarrow \sigma
_X\rho\sigma_X$, $\sigma_Y\rho\sigma_Y$ and $\sigma_Z\rho\sigma_Z$ respectively and are equivalent to the identity channel upto a unitary transformation. These vertices along with the identity channels $\{1,1,1\}$, form the extremal points of the tetrahedron that contains all qubit channels.

\subsection{Concatenation of quantum channels and the quantum Switch}\label{subsec:conc_ch}
It is well known that if $\Lambda$ is EBC (or IBC) then $\Phi\circ\Lambda$ is also EBC (or IBC) for any channel $\Phi$. Therefore, if EBC (or IBC) is unavoidable for communication, concatenating it straightforwardly with another channel, i.e. without an external resource like superposition of direct pure processes or SDPP \cite{PRA-Rubino}, will not help in the transfer of entanglement (or steerable states).\\
Now suppose a quantum channel $\Lambda$ is useless even after the use of the quantum switch. Then we show below that there may exist another quantum channel $\Phi$ so that $\Phi\circ\Lambda$ is useful under the action of quantum switch (The operation is demonstrated in Fig. \ref{fig:conc-useful}). Therefore, the concatenation of quantum channel may provide an advantage when the communication is supported by the quantum switch. One should note that this concatenation is \textit{before} the implementation of the switch (Fig. \ref{fig:conc-useful}) and is different from the concatenation that happens intrinsically in the circuit that implements the switch (Fig. \ref{fig:switch}). It has been shown in previous studies \cite{PRA-Rubino} that such intrinsic concatenation when used in an SDPP (without indefinite causal order), can also provide an advantage in terms of quantum capacity. But the aforesaid study of \cite{PRA-Rubino} is quite different than what we study in this section. Now at first, we make the following two observations, and then we have a detailed study for the case of Pauli channels and three-parameter non-unital channel.

\begin{figure}[hbt!]
\includegraphics[width=7cm,height=5cm]{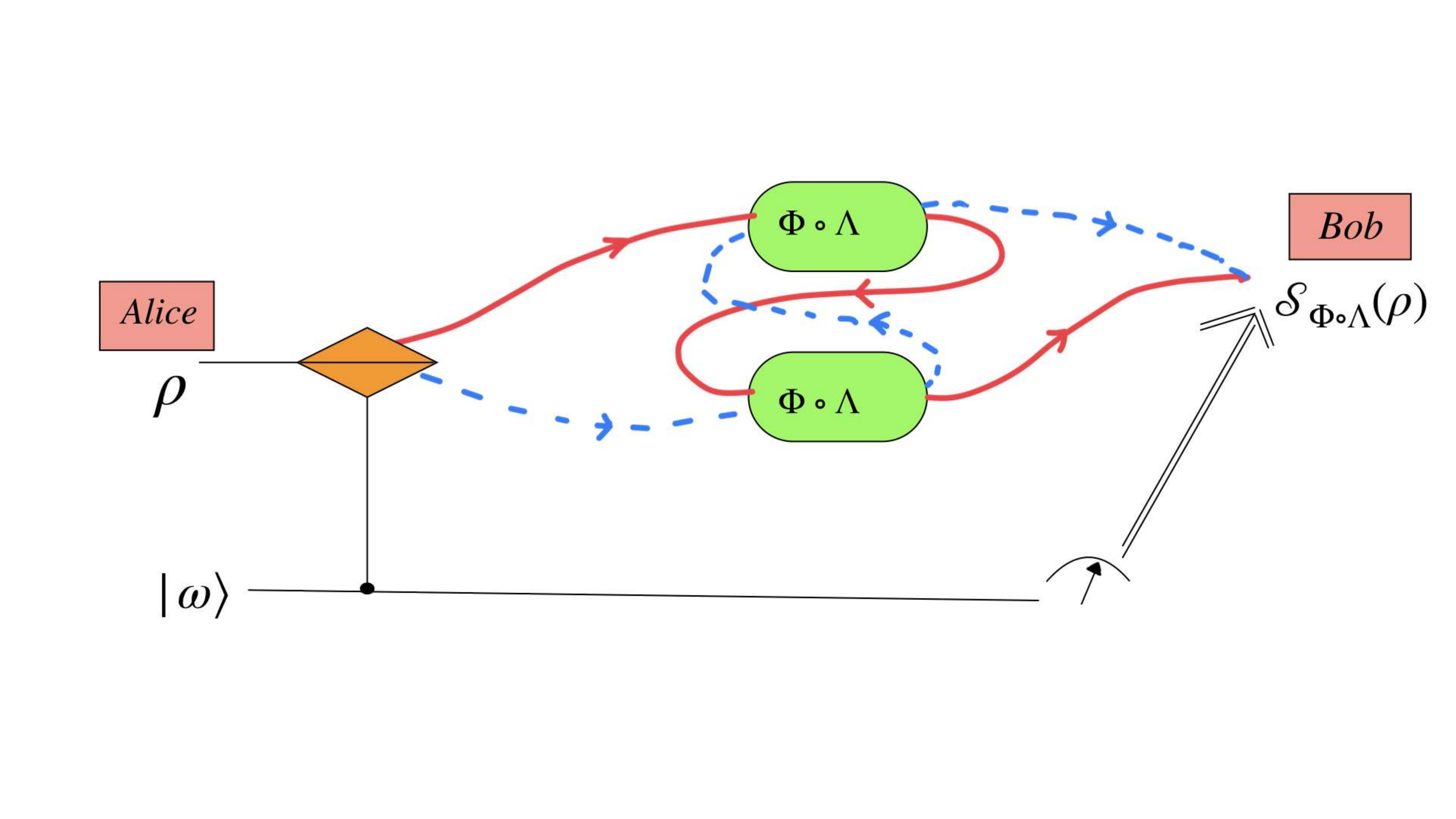}
\caption{The figure shows the concatenation of quantum channels before the implementation of the switch. In this respect this figure is different from the figure \ref{fig:switch}. In this section we compare the effect of quantum switch on the channel $\Lambda$ i.e., the property of ($S_{\Lambda,\Lambda,\omega}$) with the effect of quantum switch on the channel $\Phi\circ\Lambda$ i.e., the property of $S_{\Phi\circ\Lambda,\Phi\circ\Lambda,\omega}$ (represented as $\mathcal{S}_{\Phi\circ\Lambda}$ in the figure) as shown above.}
\label{fig:conc-useful}
\end{figure}
\begin{obs}
There exist quantum channels which do not provide any advantage under the action of quantum switch, but such a channel may become useful under the action of quantum switch if it is concatenated with another quantum channel.  \label{conc-switch}
\end{obs}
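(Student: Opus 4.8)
Since Observation~\ref{conc-switch} is an existence claim, the plan is to prove it by exhibiting an explicit pair $(\Lambda,\Phi)$: a \emph{completely useless} channel $\Lambda$ — one that yields no advantage under $\mathsf{S}_{\Lambda,\Lambda,\omega}$ — together with a channel $\Phi$ for which $\mathsf{S}_{\Phi\circ\Lambda,\Phi\circ\Lambda,\omega}$ does provide useful communication. I would work in the entanglement-transfer setting, where ``useless'' means EBC, because Theorem~1(a) supplies a ready-made certificate of complete uselessness: if both branches $C_{+,\Lambda,\Lambda}$ and $C_{-,\Lambda,\Lambda}$ are EB CP maps, no control operation can render the post-switch channel non-EBC. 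The IBC analogue via Theorem~1(b) would furnish a parallel example for steering/QRAC.

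For the construction I would take the dephasing channel $\Lambda(\rho)=(\rho+\sigma_x\rho\sigma_x)/2$, which is EBC and hence useless. Its Kraus operators $\{I,\sigma_x\}/\sqrt2$ commute, so by \eqref{switched_channel} the commutator branch vanishes and $\bar{C}_+$ merely reproduces $\Lambda$; both branches are EB CP, so by Theorem~1(a) $\Lambda$ is completely useless. I would then choose $\Phi(\rho)=\sigma_z\rho\sigma_z$, giving $(\Phi\circ\Lambda)(\rho)=(\sigma_y\rho\sigma_y+\sigma_z\rho\sigma_z)/2$, whose Kraus operators $\{\sigma_y,\sigma_z\}/\sqrt2$ now \emph{anticommute}. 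Feeding this into \eqref{1st_branch_random}--\eqref{2nd_branch_random}, the $\bar{C}_+$ branch collapses to the identity channel (obtained with probability $1/2$), so post-selecting that branch restores perfect transmission of entanglement. Since $\Phi\circ\Lambda$ is itself EBC, it stays useless as a bare channel, and the improvement is attributable solely to the switch acting after concatenation.

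The conceptual engine behind the example — and the step I would highlight — is that the switch feeds on the \emph{non-commutativity} of the Kraus operators, the branches being built from $\{\mathcal A_x,\mathcal A_y\}$ and $[\mathcal A_x,\mathcal A_y]$. A channel whose Kraus operators mutually commute is inert under the switch, whereas composing it with a suitable operation can realign those Kraus operators into an anticommuting set, activating the $\bar{C}_+$ (or $\bar{C}_-$) branch. This is exactly why a unitary, trivially reversible on its own, can nonetheless change the switched output, consistent with the remark that unitaries act non-trivially under the switch; it also recovers the channel of \cite{chiribella_perfect} as the post-concatenation object.

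The main obstacle is to make the example \emph{substantive} rather than a mere relabeling: because $\Phi=\sigma_z$ is unitary, $\Phi\circ\Lambda$ is unitarily equivalent to $\Lambda$, so a reader could dismiss the gain as a change of basis. To meet this I would additionally exhibit a genuinely non-unitary (indeed non-unital) $\Phi$ achieving the same effect; there the composed channel leaves the Pauli family, \eqref{1st_branch_random}--\eqref{2nd_branch_random} no longer apply, and the branches must be recomputed from the general Kraus form $\mathsf{S}_{x,y}$ with EB-ness tested by partial transposition of the branch's Choi matrix. Verifying non-separability of that Choi matrix for a concrete non-unital $\Phi$ is the technical heart, and care is needed to avoid over-symmetric choices (such as the fully depolarizing channel) whose symmetry can wash the effect out. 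Checking the octahedron condition $\sum_i|\lambda_i|\le1$ against the branch locations in Fig.~\ref{cpuseful} would guide the search for both the completely useless $\Lambda$ and the activating $\Phi$.
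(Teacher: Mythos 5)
Your proposal is correct and follows essentially the same route as the paper: the paper takes the dephasing channel $\Lambda(\rho)=\tfrac12(\rho+\sigma_z\rho\sigma_z)$, notes via Eqs.~\eqref{1st_branch_random}--\eqref{2nd_branch_random} that its commutator branch vanishes and its anticommutator branch reproduces $\Lambda$ (hence no switch advantage), and then concatenates with the unitary $\Phi(\rho)=\sigma_x\rho\sigma_x$ to obtain the channel $\tfrac12(\sigma_x\rho\sigma_x+\sigma_y\rho\sigma_y)$ of Ref.~\cite{chiribella_perfect}, which permits perfect communication under the switch --- your construction is the same up to a relabeling of Pauli axes. Your worry that a unitary $\Phi$ makes the example dismissible is unfounded for the statement as written (the paper instead turns this into the remark that unitaries act non-trivially under the switch), and your proposed non-unitary strengthening is exactly what the paper defers to its separate Observation~2, where $\Phi$ is itself an EBC.
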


\begin{proof}
Suppose, $\Lambda(\rho)=\frac{1}{2}\rho+\frac{1}{2}\sigma_z\rho\sigma_z$. Now, from the Theorem 4 of \citep{Rus03}, it is clear that $\Lambda$ is an EBC.  From the equation \eqref{1st_branch_random} and equation \eqref{2nd_branch_random} we obtain that

\begin{equation}
q\bar{C}_{\Lambda,+}(\rho)=\frac{1}{2}\rho+\frac{1}{2}\sigma_z\rho\sigma_z=\Lambda(\rho);~(1-q)\bar{C}_{\Lambda,-}(\rho)=0.
\label{no_advantage_chan}
\end{equation} 

Therefore, since $\Lambda$ is an EBC, from equation \eqref{no_advantage_chan}, we get that this channel can not give advantage under the action of quantum switch.

Now suppose $\Lambda^{\prime}=\Phi\circ\Lambda$ where $\Phi(\rho)=\sigma_x\rho\sigma_x$. Then,

\begin{equation}
\Lambda^{\prime}(\rho)=\frac{1}{2}\sigma_x\rho\sigma_x+\frac{1}{2}\sigma_y\rho\sigma_y.
\end{equation}

This is the channel used in Ref. \cite{chiribella_perfect}. Therefore, we know that perfect communication can be achieved through this quantum channel using the quantum switch. It can be easily shown there exist several other such examples of channels.  
\end{proof}
This example also indicates that the action of unitary on the output space has a non-trivial effect when the action of quantum switch is considered. Since $T$-matrices of the Pauli channels are diagonal, those channels are commuting, and therefore, we conclude that the action of a unitary on the input space also has non-trivial effects when the action of the quantum switch is considered.

The results of the concatenation are more interesting if $\Phi$ is also an EBC (i.e., a useless channel) as we show in the next observation:

\begin{obs}
There exist quantum channels which do not provide any advantage under the action of quantum switch, but such a channel may become useful under the action of quantum switch if it is concatenated with another EBC. 
\end{obs}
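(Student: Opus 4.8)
The plan is to mirror the construction used in Observation 1, but to replace the unitary post-processing channel $\Phi(\rho)=\sigma_x\rho\sigma_x$ by a channel that is itself entanglement breaking. First I would reuse the same dephasing channel $\Lambda(\rho)=\frac{1}{2}\rho+\frac{1}{2}\sigma_z\rho\sigma_z$, which by equation \eqref{no_advantage_chan} is an EBC whose $\bar{C}_{\Lambda,-}$ branch vanishes and whose $\bar{C}_{\Lambda,+}$ branch merely reproduces $\Lambda$; hence $\Lambda$ yields no advantage under the switch. What remains is to exhibit a \emph{second} entanglement breaking channel $\Phi$ for which $\Phi\circ\Lambda$ does acquire an advantage.

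The key point is that the Chiribella channel $\Phi(\rho)=\frac{1}{2}\sigma_x\rho\sigma_x+\frac{1}{2}\sigma_y\rho\sigma_y$ is itself a suitable choice: its $T$-matrix is $\mathrm{diag}(0,0,-1)$, a vertex of the octahedron (so $\sum_i|\lambda_i|=1$), which certifies $\Phi\in\cC^2_{EBC}$. I would then compute $\Phi\circ\Lambda$ directly, using the Pauli identities $\sigma_x\sigma_z=-i\sigma_y$ and $\sigma_y\sigma_z=i\sigma_x$ so that the cross terms collapse and
\begin{equation}
(\Phi\circ\Lambda)(\rho)=\frac{1}{2}\sigma_x\rho\sigma_x+\frac{1}{2}\sigma_y\rho\sigma_y=\Phi(\rho).
\end{equation}
That is, concatenating the dephasing EBC $\Lambda$ with the entanglement breaking $\Phi$ again reproduces exactly the channel of Ref.~\cite{chiribella_perfect}, whose $\bar{C}_{\Lambda,+}$ branch is the identity channel and therefore supports perfect transfer of entanglement once the switch is applied.

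Since $\Phi\circ\Lambda$ is still an EBC (the concatenation of an EBC with any channel is EBC), this establishes the claim: an EBC that is useless even after the switch can be rendered useful under the switch by concatenating it with \emph{another} EBC. The verification that the chosen $\Phi$ is genuinely entanglement breaking, together with the short Pauli-algebra computation of the composition, are both routine. The genuine content beyond Observation 1—and the one step requiring care—is recognizing that the more restrictive requirement $\Phi\in\cC^2_{EBC}$ (rather than $\Phi$ unitary) can still be met while steering $\Phi\circ\Lambda$ onto the vertex $(0,0,-1)$ of the octahedron; taking $\Phi$ to be the Chiribella channel itself neatly resolves this, and I would note that other EBC choices are available as well.
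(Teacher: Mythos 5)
Your proof is correct, and its skeleton coincides with the paper's: both reuse the dephasing channel $\Lambda(\rho)=\frac{1}{2}\rho+\frac{1}{2}\sigma_z\rho\sigma_z$ of Observation 1, and both choose a second EBC $\Phi$ that absorbs $\Lambda$ under composition, $\Phi\circ\Lambda=\Phi$, so that the problem reduces to showing that $\Phi$ itself is useful under the switch. In fact your $\Phi$ is precisely the $\lambda_3=1$ endpoint of the one-parameter family
\[
\Phi(\rho)=\tfrac{1-\lambda_3}{4}\rho+\tfrac{1+\lambda_3}{4}\sigma_x\rho\sigma_x+\tfrac{1+\lambda_3}{4}\sigma_y\rho\sigma_y+\tfrac{1-\lambda_3}{4}\sigma_z\rho\sigma_z
\]
that the paper uses. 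Where you differ is the final verification: the paper computes the two switch branches $\bar{C}_{\Phi,\pm}$ explicitly, applies the controlled operation $\cU=\Id\otimes \ket{+}\bra{+}+\sigma_z\otimes\ket{-}\bra{-}$, and uses Ruskai's criterion on the $T$-matrix of the resulting effective channel to conclude it is non-EBC exactly when $\lambda_3>\frac{1}{3}$; you instead invoke the known result of Ref.~\cite{chiribella_perfect} that the Chiribella channel admits perfect state transfer under the switch. Since the observation is purely existential, your single extremal example plus citation constitutes a complete and shorter proof (modulo a notational slip: the identity branch you mention is $\bar{C}_{\Phi\circ\Lambda,+}$, not $\bar{C}_{\Lambda,+}$, whose branch is merely $\Lambda$ itself). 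What the paper's longer computation buys is a continuum of examples with a quantitative threshold, showing the effect is not confined to the unitarily unique perfect channel, and the effective channel $\cC_{eff}$ derived there is reused in the subsequent QRAC and steering examples — content your version does not reproduce.
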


\begin{proof}
 Consider the quantum channels
\begin{equation}
\Lambda(\rho)=\frac{1}{2}\rho+\frac{1}{2}\sigma_z\rho\sigma_z \label{bad_chan}
\end{equation} 
 
 and 
 
 \begin{equation}
 \Phi(\rho)=\frac{1-\lambda_3}{4}\rho+\frac{1+\lambda_3}{4}\sigma_x\rho\sigma_x+\frac{1+\lambda_3}{4}\sigma_y\rho\sigma_y+\frac{1-\lambda_3}{4}\sigma_z\rho\sigma_z\label{good_chan}
 \end{equation}
 
where the $-1\leq \lambda_3\leq 1$. 
 
 From equation \eqref{no_advantage_chan}, we know that $\Lambda(\rho)$ can not give advantage using quantum switch.
  
 Now,
 \begin{align}
 \Lambda^{\prime}(\rho)&=(\Phi\circ\Lambda)(\rho)\nonumber\\
 &=\Phi(\rho).
 \end{align}

Note that $\Phi$ has $T$-matrix

\begin{align}
\mathcal{T}_{\Phi} = 
\begin{pmatrix}
1 & 0 & 0 & 0\\
0 & 0& 0& 0\\
 0 & 0 & 0 &0\\
0&0&0& -\lambda_3\\
\end{pmatrix}
\end{align}

Therefore, from the Theorem 4 of \citep{Rus03}, we get that $\Phi$ is an EBC for $-1\leq\lambda_3\leq 1$.

Now from the equation \eqref{1st_branch_random} and equation \eqref{2nd_branch_random} we obtain that

\begin{align}
\bar{C}_{\Phi,+}=&[\frac{(1+\lambda_3^2)}{4}\rho+\frac{(1-\lambda_3^2)}{8}\sigma_x\rho\sigma_x
\nonumber\\
&+\frac{(1-\lambda_3^2)}{8}\sigma_y\rho\sigma_y+\frac{(1-\lambda_3)^2}{8}\sigma_z\rho\sigma_z]/q
\end{align}

and

\begin{align}
\bar{C}_{\Phi,-}=&[\frac{(1-\lambda_3^2)}{8}\sigma_x\rho\sigma_x+\frac{(1-\lambda_3^2)}{8}\sigma_y\rho\sigma_y\nonumber\\
&+\frac{(1+\lambda_3)^2}{8}\sigma_z\rho\sigma_z]/(1-q)
\end{align}
 
with $q=2\lambda_3+3\lambda_3^2$. 

Now if we implement a measurement-based controlled operation  $\cU=\Id\otimes \ket{+}\bra{+}+\sigma_z\otimes\ket{-}\bra{-}$ and we prepare the control state as $\omega=\ket{+}\bra{+}$, then the effective channel on the system (after tracing out the control part) is
 \begin{align}
 \cC_{eff}(\rho)=&\tr_{\cH_c}[\cU S_{\Phi,\Phi,\omega}(\rho)\cU^{\dagger}]\nonumber\\
=&\dfrac{3+2\lambda_3+3\lambda^2}{8}\rho+\dfrac{(1-\lambda_3)^2}{8}\sigma_z\rho\sigma_z \nonumber\\& +\dfrac{1-\lambda_3^2}{4}(\sigma_x\rho\sigma_x+\sigma_y\rho\sigma_y)\label{good_chan_switched}
\end{align}
The T-matrix of the above channel $\cC_{eff}$ is given by
\begin{align}
\mathcal{T}_{\cC_{eff}} = 
\begin{pmatrix}
1 & 0 & 0 & 0\\
0 & \frac{(1+\lambda_3)^2}{4}& 0&0\\
 0 & 0 & \frac{(1+\lambda_3)^2}{4}&0\\
0&0&0&  \lambda_3^2\\
\end{pmatrix}
\end{align}
Therefore, from the Theorem 4 of \citep{Rus03}, for $\lambda_3^2+\frac{(1+\lambda_3)^2}{2}> 1$  or equivalently for $1 \geq\lambda_3> \frac{1}{3}$, the channel $\cC_{eff}$ is not EBC.   It can be easily shown there exist several other such examples of channels.
\end{proof}
The above result can be explored further by scanning through the space of all Pauli channels and one can quantify the advantage that concatenation can provide with the use of the switch.  Let us first take a general Pauli channel $\Lambda$ given by the following $T$ matrix
\begin{align}
\mathcal{T}_{\Lambda} = 
\begin{pmatrix}
1 & 0 & 0 & 0\\
0 & \lambda_1& 0&0\\
 0 & 0 & \lambda_2&0\\
0&0&0&  \lambda_3\\
\end{pmatrix}
\end{align}
We generate a random pair of Pauli EBCs $\mathcal{T}_{\Lambda}$ and $\mathcal{T}_{\Gamma}$ that are not useful under either or both branches of the switch and check whether the concatenated channel $\mathcal{T}_{\Lambda \circ\Gamma}$ is useful under the $C_+$ or $C_-$  branch. We can also compare the results of a concatenation of Pauli EBCs with the concatenation of three parameters non-unital EBCs (say $\Lambda^{\prime}$) given by
\begin{align}
\mathcal{T}_{\Lambda^{\prime}} = 
\begin{pmatrix}
1 & 0 & 0 & 0\\
0 & k_1& 0&0\\
 0 & 0 &k_1&0\\
t&0&0& k_3\\
\end{pmatrix}.
\end{align}
 We present the results in Fig. \ref{venn}.
\begin{figure}[hbt!]
\centering
\begin{subfigure}{0.5\textwidth}
\centering
    \includegraphics[width=0.75\linewidth]{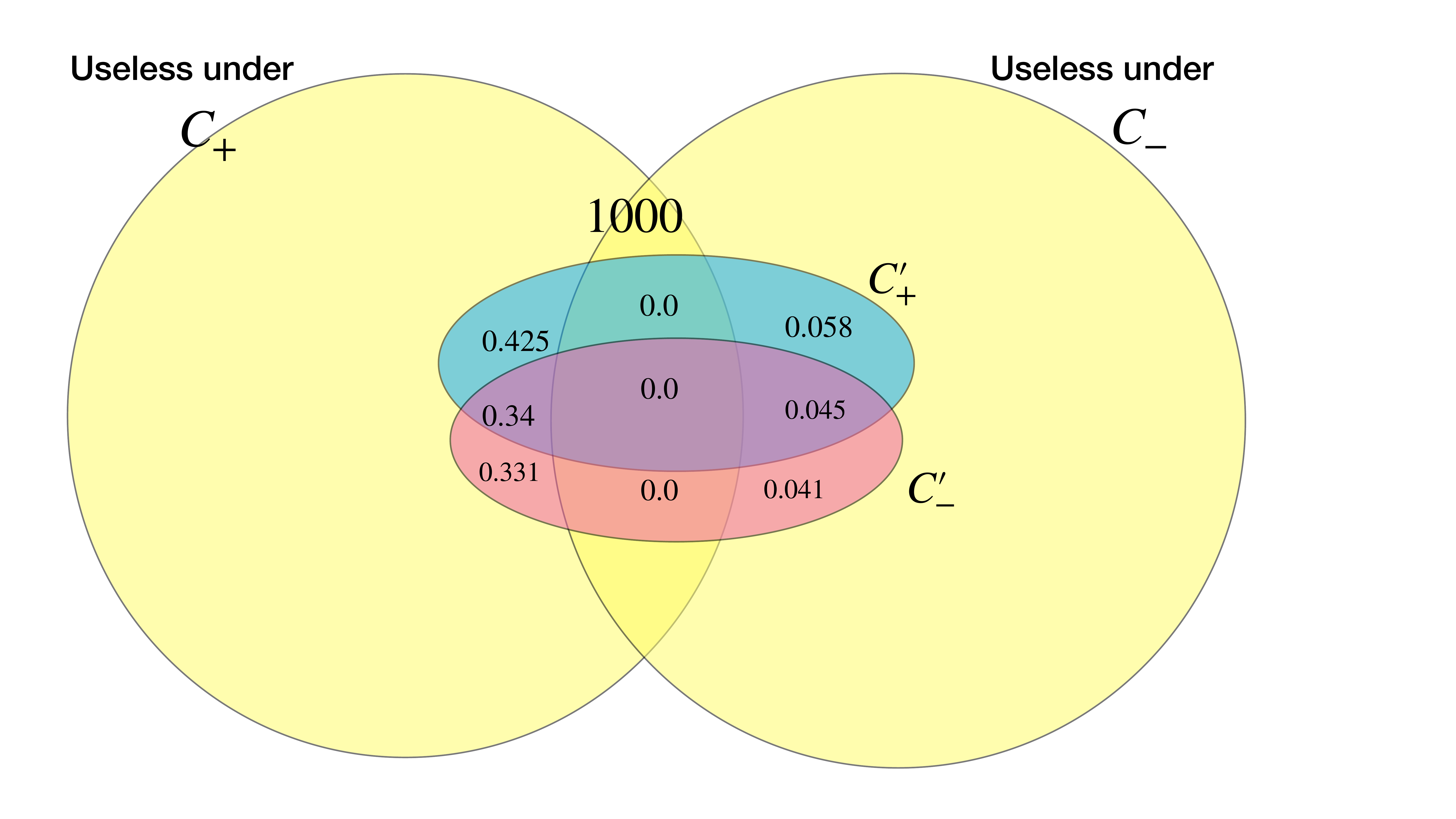}
    \caption{Pauli Channels}
    \label{venn1}
\end{subfigure}%
\\
\begin{subfigure}{0.5\textwidth}
\centering
    \includegraphics[width=0.75\linewidth]{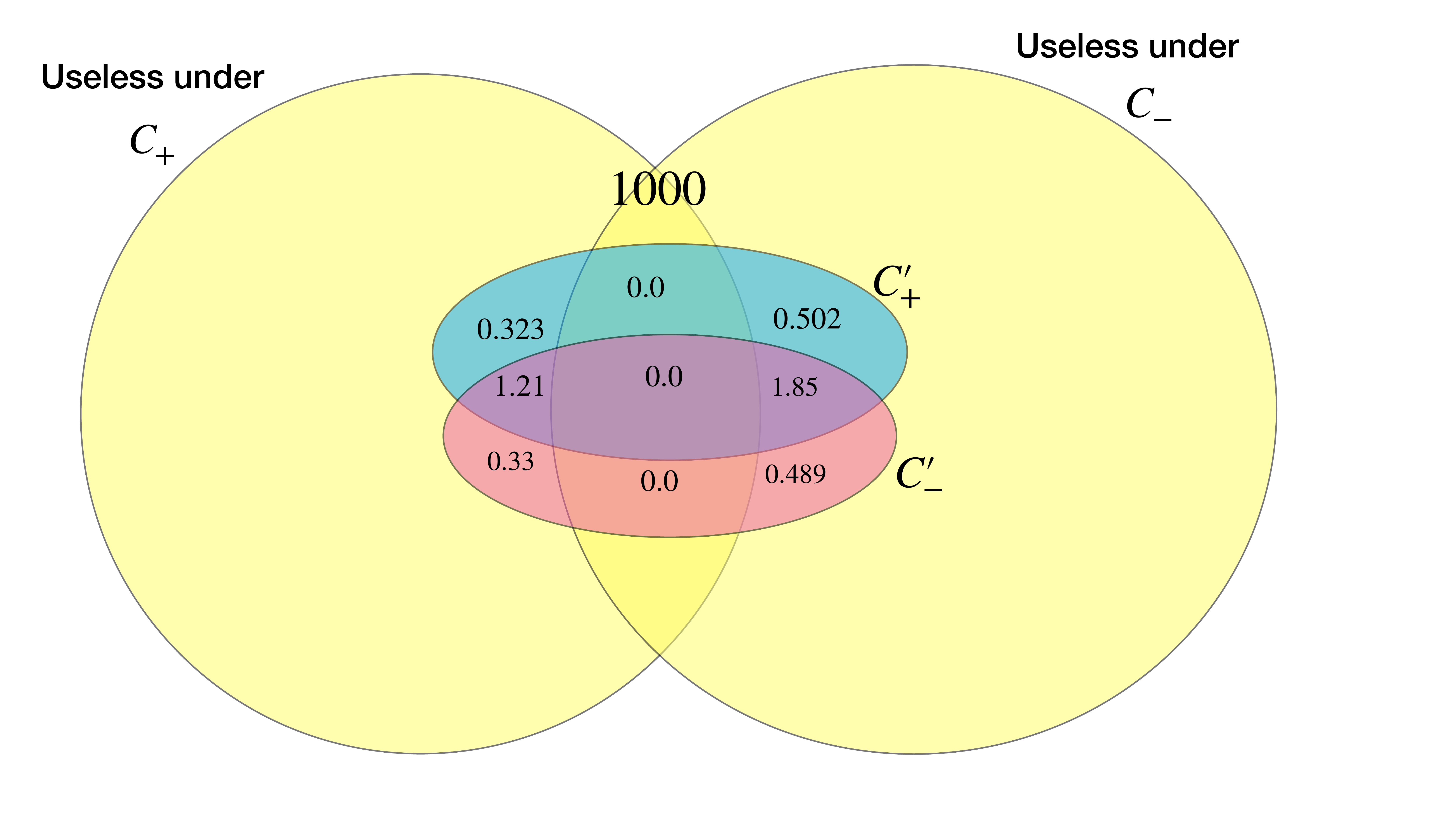}
    \caption{Non-unital Channels}
    \label{venn2}
\end{subfigure}
\caption{The Venn diagrams show the fraction of useless channel pairs that become useful after concatenation. The larger circles (in yellow) contain the set of EBCs that remain entanglement breaking under either $C_+$ or $C_-$ or both. ]Recall that in the case of non-unital maps, the branches $C_{\pm}$ are completely positive maps but not trace preserving. We concatenate two channels that are both useless under $C_+$ (or $C_-$ or both). The sets denoted by $C_+'$ and $C_-'$ contain the pairs that after such a concatenation become non-entanglement breaking under either $C_+$ or $C_-$ branches respectively. Note that the fractions mentioned in the diagram are out of a 1000 pairs of channels. For instance, from the figure \ref{venn1} we see that out of $10^6$ (randomly chosen) useless pairs of Pauli channels (useless under either $C_+$ or $C_-$ or both), $425$ pairs of channels are those pairs that are useless only under $C_+$ and on concatenation become useful only under $C_+$. To estimate these measures we generated about $10^7$ random pairs of EBCs and check for their behaviour under the switch after concatenation.}
\label{venn}
\end{figure}
A useless channel is completely useless if under both $C_+$ and $C_-$ it remains useless. Such channels remain useless \emph{even after} the use of the quantum switch. From Fig. \ref{venn} it can be found that such channels remain completely useless even after concatenation- both for the Pauli channels and the particular class of non-unital channels that we have used here. This fact encourages us to provide the following conjecture:\\ 

\textbf{Conjecture 1.} \emph{Concatenation of two completely useless channels is always completely useless.}\\ 

A particularly useful set of channels is at the intersection of $C_+'$ and $C_-'$in Fig. \ref{venn}. From this figure, we see that there exist channels that only probabilistically can transfer entanglement  under the switch, but after concatenation with another channel can transfer entanglement deterministically under the switch. 
\\
As an example of the effect of concatenation, figure \ref{cpconcat} shows some of the Pauli channels useless under $C_+$ inside the octahedron, which under concatenation become useful. The orange dots show the map of the concatenated channels (in red)  outside of the octahedron under the $C_+$ channel.
\begin{figure}[hbt!]
\includegraphics[scale=.6]{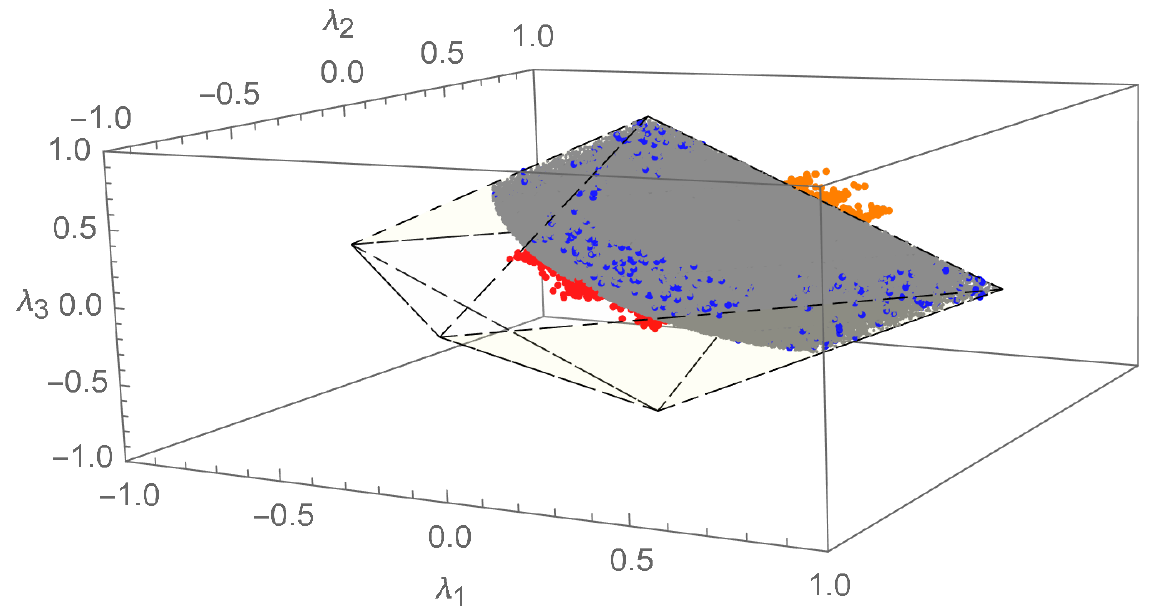}
\caption{The grey region shows the useless Pauli channels under concatenation. The blue dots are the useless channels that can be converted to useful under concatenation for $C_+$ channel. The red dots are the result of the concatenations, and the orange dots are the mapping of the red dots outside the octahedron under $C_+$ channel.}
\label{cpconcat}
\end{figure}
We also see that, there is an interesting geometry of the concatenations that results in useful channels under $C_+$ or $C_-$. Figure \ref{dist} shows the Euclidean distances between the Pauli channels that become useful under one of the branches of the switch after concatenation. The Euclidean distance between the channels is calculated as $D(\Lambda,\Gamma)=\sqrt{\sum(\lambda_i-\gamma_i)^2}$ where $[\cT_{\Gamma}]_{ii}=\gamma_i$ for all $i\in\{1,2,3\}$. We can see that the pairs that concatenate into useful channels under switch, are not located close to each other. In fact these distances are comparable to the edge length of the octahedron, i.e., $\sqrt{2}$.
\begin{figure}[hbt!]
\centering
\begin{subfigure}{0.5\textwidth}
\centering
\includegraphics[width=0.75\linewidth]{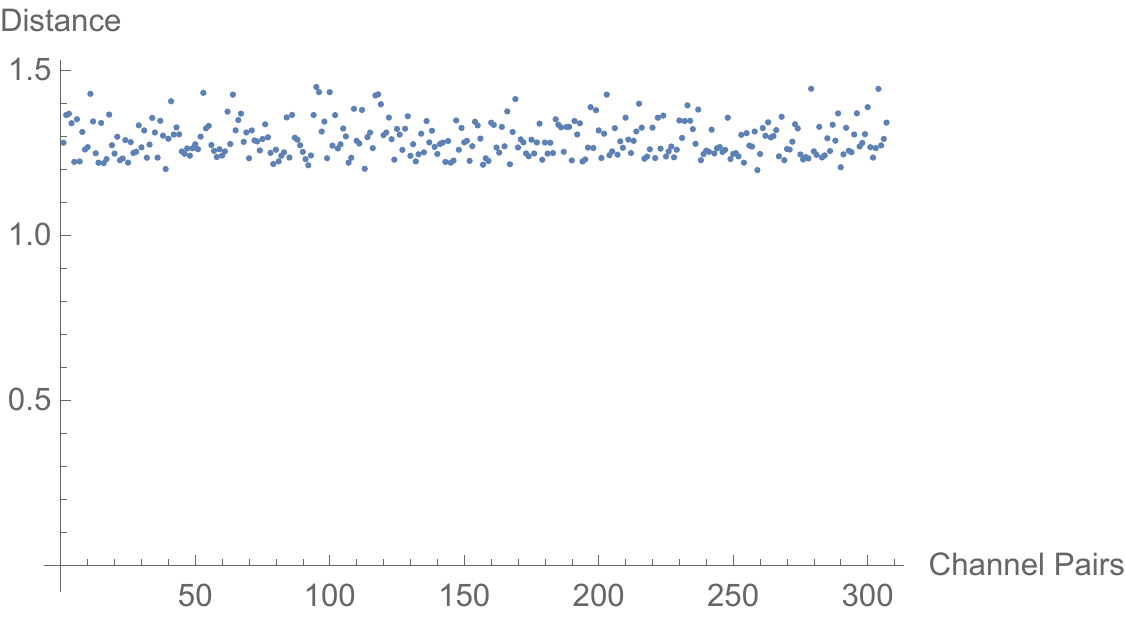}
    \caption{}
    \label{dist1}
\end{subfigure}%
\\
\begin{subfigure}{0.5\textwidth}
\centering
    \includegraphics[width=0.75\linewidth]{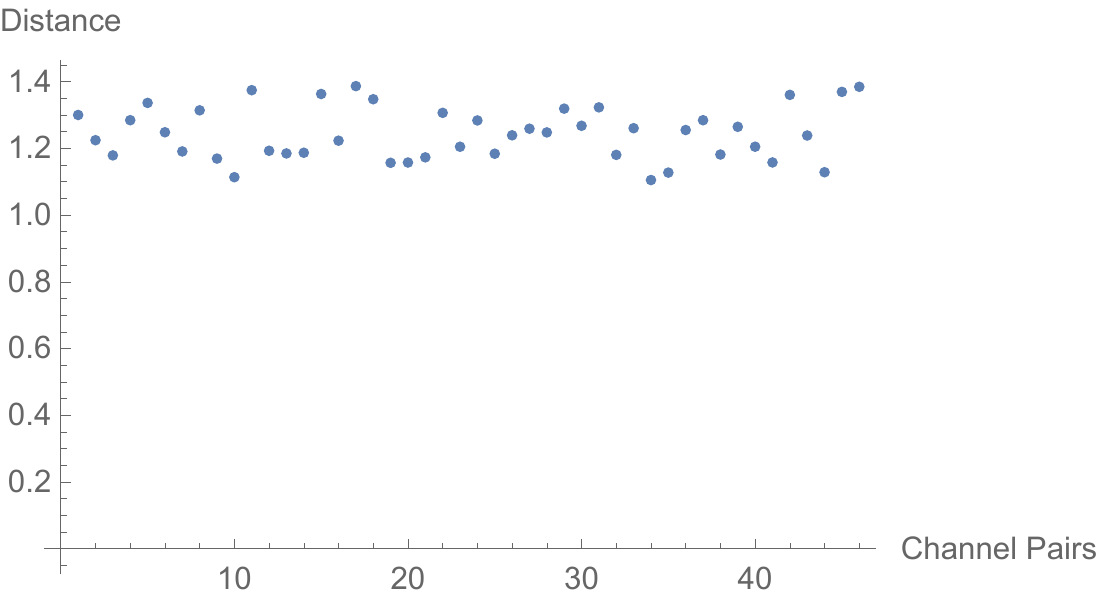}
    \caption{}
    \label{dist2}
\end{subfigure}
\caption{\ref{dist1}: Spread of the euclidean distance between the Pauli channels that are useless under $\bar{C}_+$ and which, after concatenation, give useful channels under $\bar{C}_+$. We have indexed the set of channel pairs with integers which are plotted along the x-axis. The average distance is $\sim$ 1.28 \ref{dist2}: Spread of the euclidean distance between the channels which are useless under $\bar{C}_-$ and which after concatenation give useful channels under $\bar{C}_-$. The average distance for such pairs is $\sim$ 1.21, which is comparable to the length of the edge of the octahedron.
}
\label{dist}
\end{figure}
\subsection{Advantages in different information theoretic tasks}\label{subsec:Advantages_information}
\subsubsection{Advantage in quantum random access code}
\label{subsubsec:adv-QRAC}
Let, Alice has $n$-dit string denoted by $\vec{x}=(x_1,.....,x_n)$ at her disposal. She encodes a particular $n$-dit string in a state of a $d$-level quantum system (i.e., in a qudit) and then transfers this qudit to Bob. In addition, Bob receives a random number $j \in \{1, 2, \ldots, n\}$. Now, Bob's task is to guess the $j$-th dit $x_j$. He does this by doing a measurement on the quantum system sent by Alice. He has $n$ choices of measurements each with $d$ outcomes $0, 1, \ldots, d - 1$. After obtaining the random number $j$, he performs the $j$-th measurement on the qudit sent by Alice. Depending on the outcome of the measurement, Bob guesses the $j$-th entry $x_j$ of Alice's string $\vec{x}$. Let, his guess be $y$. The game will be successful if $y=x_j$. This is known as $(n,d)-$quantum random access code (QRAC). It has a classical counterpart, known as $(n,d)-$random access code (RAC), where Alice is allowed to send dits to Bob instead of qudits. The maximum average success probability of random access code (for $n = 2$ case) is $P^{(2,d),max}_{rac}=\frac{1}{2}(1+\frac{1}{d})$. But the maximum average success probability of quantum random access code is $P^{(2,d),max}_{qrac}=\frac{1}{2}(1+\frac{1}{\sqrt{d}})$. Therefore, $P^{(2,d),max}_{rac}< P^{(2,d),max}_{qrac}$. We will call a particular encoding scheme of Alice and a particular set of measurements performed by Bob to guess the desired $d$-level entry in Alice's $n$-dit string as a ``useful strategy" if it can achieve a quantum advantage in the average success probability i.e., the average success probability $P^{(2,d)}_{qrac}>P^{(2,d),max}_{rac}$. It is shown in \cite{Heino_random} that the incompatibility of measurement is necessary for $(2,d)-$ QRAC  to be useful.
\\
Now, suppose only a noisy channel $\Lambda$ is available to Alice to transfer the qudit to Bob. In this case, the achievable success probability decreases (assuming that the quantum switch is unavailable to Alice and Bob). Depending on $\Lambda$, this decrement of probability can be drastic. In this context, we have the following theorem:

\begin{theorem}\label{Th:2-IBC,QRAC}
If Alice has only $2-IBC$ channels to transfer qudits to Bob in a $(2,d)-QRAC$ game, there exist no measurement strategy for Bob as well as qudit encoding scheme for Alice which can be useful i.e., using which one can get quantum advantage.
\end{theorem}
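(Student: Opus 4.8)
The plan is to reduce the channel-assisted QRAC to an \emph{ordinary} $(2,d)$-QRAC played with the dual-transformed measurements, and then to invoke the incompatibility criterion of \cite{Heino_random}. First I would fix an arbitrary encoding $\vec{x}\mapsto\rho_{\vec{x}}\in\cS(\cH)$ for Alice and an arbitrary pair of $d$-outcome measurements $B_1=\{B_1(y)\}$ and $B_2=\{B_2(y)\}$ for Bob, where Bob measures $B_j$ upon receiving the index $j\in\{1,2\}$. Since Alice must route the qudit through the $2$-IBC channel $\Lambda$, Bob in fact measures $B_j$ on $\Lambda(\rho_{\vec{x}})$, and the probability of his outputting a value $y$ is
$$\tr[\Lambda(\rho_{\vec{x}})\,B_j(y)]=\tr[\rho_{\vec{x}}\,\Lambda^*(B_j(y))],$$
by the defining property of the dual channel. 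Hence the whole game produces exactly the statistics of a channel-free $(2,d)$-QRAC in which Alice uses the same encoding $\{\rho_{\vec{x}}\}$ but Bob measures the \emph{effective} observables $\Lambda^*(B_1)$ and $\Lambda^*(B_2)$. I would note here that each $\Lambda^*(B_j)$ is a genuine observable, since $\Lambda^*$ is completely positive and unital and therefore maps POVM elements to positive operators summing to $\Id$.

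The key structural input is now the $2$-IBC hypothesis. By definition, the dual of $\Lambda$ maps \emph{every} pair of observables to a compatible pair; in particular $\{\Lambda^*(B_1),\Lambda^*(B_2)\}$ is compatible, so there exists a joint observable $\cG=\{\cG(y_1,y_2)\}$ with
$$\Lambda^*(B_1(y_1))=\sum_{y_2}\cG(y_1,y_2),\qquad \Lambda^*(B_2(y_2))=\sum_{y_1}\cG(y_1,y_2).$$
The important feature is that this conclusion holds irrespective of Alice's encoding and of Bob's original choice of $B_1,B_2$, because the $2$-IBC property is a statement about the channel alone.

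Finally I would invoke the result of \cite{Heino_random}, which asserts that incompatibility of Bob's two measurements is \emph{necessary} for a $(2,d)$-QRAC to beat the classical bound $P^{(2,d),max}_{rac}=\tfrac12(1+\tfrac1d)$. Since the effective measurements $\Lambda^*(B_1),\Lambda^*(B_2)$ are compatible, the attainable average success probability of the channel-assisted game is bounded above by $P^{(2,d),max}_{rac}$ for the chosen encoding. As the encoding $\{\rho_{\vec{x}}\}$ and the measurement pair $(B_1,B_2)$ were arbitrary, no strategy can be useful, which is precisely the claim.

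I expect the only delicate point to be the clean application of \cite{Heino_random}: one must use the criterion in the form \emph{``compatibility of Bob's measurements implies no advantage for any state preparation,''} rather than merely for some fixed optimal preparation, and one must argue that restricting Bob to $d$-outcome measurements is without loss of generality. Both are handled by the joint-measurement simulation afforded by $\cG$ — Bob can measure $\cG$ once, obtain the pair $(y_1,y_2)$, and output the relevant $y_j$, which turns his decoding into a classical post-processing of a single jointly-measured outcome and hence cannot surpass the classical RAC value for any encoding.
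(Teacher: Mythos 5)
Your proposal is correct and follows essentially the same route as the paper's proof: pull the channel onto Bob's measurements via the dual $\Lambda^*$, use the $2$-IBC hypothesis to conclude that $\Lambda^*(B_1)$ and $\Lambda^*(B_2)$ are compatible, and invoke the result of \cite{Heino_random} — which the paper uses in its operator-norm form $\frac{1}{2d^2}\sum_{\vec{x}}\|\Lambda^*(M_1(x_1))+\Lambda^*(M_2(x_2))\| \leq \frac{1}{2}\left(1+\frac{1}{d}\right)$, a bound that is automatically independent of Alice's encoding, so the ``for any preparation'' worry you raise is already settled by the cited theorem itself. One caution about your closing remark: the joint-measurement simulation alone does \emph{not} establish the classical bound, since measuring $\cG$ hands Bob a \emph{pair} of dits $(y_1,y_2)$, which is strictly more communication than the single dit allowed in a classical $(2,d)$-RAC; the actual bound comes from the norm/trace estimate behind the theorem in \cite{Heino_random} (essentially $M_1(x_1)+M_2(x_2)\leq \Id + \cG(x_1,x_2)$ together with $\sum_{x_1,x_2}\|\cG(x_1,x_2)\|\leq \tr[\Id]=d$), so you should lean on the citation rather than on that simulation heuristic.
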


\begin{proof}
Let Alice encodes a $2-dit$ string $\vec{x} = (x_1, x_2)$ in the qudit state $\cE(\vec{x})$ and sends it to Bob through channel $\Lambda$. Now suppose Bob chooses the measurement $M_1=\{M_1(j)\}_{j\in\{ 0,1,\ldots,d-1\}}$ for the first dit and $M_2=\{M_2(j)\}_{j\in\{ 0,1,\ldots,d-1\}}$ for the second dit. Then the average success probability is

\begin{align}
P^{(2,d)}_{qrac}&=\frac{1}{2d^2}\sum_{\vec{x}}\tr[\Lambda(\cE(\vec{x})) (M_1(x_1)+M_2(x_2))]\nonumber\\
&=\frac{1}{2d^2}\sum_{\vec{x}}\tr[\cE (\vec{x})\Lambda^*(M_1(x_1)+M_2(x_2))]\nonumber\\
&\leq \frac{1}{2d^2}\sum_{\vec{x}}\mid\mid \Lambda^*(M_1(x_1))+\Lambda^*(M_2(x_2))\mid\mid\nonumber\\
&\leq \frac{1}{2}(1+\frac{1}{d})
\end{align}

where we get the last inequality from Theorem $1$ of \cite{Heino_random} and noting the fact that $\Lambda^*(M_1)$ and $\Lambda^*(M_2)$ are compatible.
\end{proof}

Therefore, $2-IBC$ channels are useless for $(2,d)-$QRAC \cite{Heino_random}. Note that incompatibility is necessary but not sufficient for $(2,d)-$QRAC. The analysis of previous sections indicates that a quantum switch can help to get a quantum advantage in $(2,d)-$QRAC though the $\Lambda$ is $2-IBC$ in some cases. We will study this through the next example.
 
\begin{example}\label{ex_prob_adv_qrac}
\rm{Suppose in a $(2,2)$-QRAC, Alice is encoding the two bit string ($ij$) in the quantum state $\rho(ij)=\ket{+,\hat{n}_{ij}}\bra{+,\hat{n}_{ij}}$ where $\ket{+,\hat{n}_{ij}}$ is the eigen state of $\hat{n}_{ij}.\vec{\sigma}$ corresponding to the eigen value $+1$ and $\hat{n}_{ij}=\frac{1}{\sqrt{2}}((-1)^{i},(-1)^{j},0)^T$ for $i,j\in\{0,1\}$. In this case, if Alice can transfer the state to Bob with perfect communication i.e., through the identity channel and Bob chooses the measurement $M_1=\{\ket{+,\hat{x}}\bra{+,\hat{x}},\ket{-,\hat{x}}\bra{-,\hat{x}}\}$ for the first bit and $M_2=\{\ket{+,\hat{y}}\bra{+,\hat{y}},\ket{-,\hat{y}}\bra{-,\hat{y}}\}$ for the second bit ($\ket{\pm,\hat{j}}$ is the eigen state of $\sigma_j$ with eigen value $\pm 1$ for $j = x, y$), then the average success probability is
\\
\begin{align}
P^{(2,2)}_{qrac}&=\frac{1}{8}\sum_{ij}\tr[\rho(ij)(M_1(i)+M_2(j))]\nonumber\\
&=\frac{1}{2}\left(1+\frac{1}{\sqrt{2}}\right)
\end{align}
which is the optimal average success probability for $(2,2)-$QRAC.\\
Now, suppose Alice has only $2-IBC$s to communicate with Bob. For example, suppose Alice has the channel $\Phi$ given in equation \eqref{good_chan}. Then, from Theorem \ref{Th:2-IBC,QRAC}, we know that no strategy of Alice and Bob can be useful. Now after using quantum switch, the effective channel $\cC_{eff}(\rho)=Tr_{\cH_c}[\cU S_{\Phi,\Phi,\omega}(\rho)\cU^{\dagger}]$ is given in equation \eqref{good_chan_switched}. If Alice uses this effective channel, the average success probability is 
\\
\begin{align}
P^{\prime(2,2)}_{qrac}&=\frac{1}{8}\sum_{ij}\tr[\cC_{eff}(\rho(ij))(M_1(i)+M_2(j))]\nonumber\\
&=\frac{(1+\lambda_3)^2}{4}P^{(2,2)}_{qrac}+\frac{(1-\frac{(1+\lambda_3)^2}{4})}{2}\nonumber\\
&=\frac{(1+\lambda_3)^2}{8\sqrt{2}}+\frac{1}{2}\nonumber\\
&=\frac{1}{2}[1+\frac{(1+\lambda_3)^2}{4\sqrt{2}}]
\end{align}
\\
Since,  $P^{(2,2),max}_{rac}=\frac{3}{4}$, we have $P^{\prime(2,2)}_{qrac}>P^{(2,2),max}_{rac}$ only for $\lambda_3>2^{\frac{3}{4}}-1\approx 0.6818$. Therefore, clearly $\cC_{eff}$ is not 2-IBC for $1\geq\lambda_3 > 2^{\frac{3}{4}}-1$. In FIG. \ref{fig:prob_random}, variation of $P^{\prime(2,2)}_{qrac}$ w.r.t. $\lambda_3$ is plotted.

\begin{figure}[hbt!]
\centering
\includegraphics[scale=0.48]{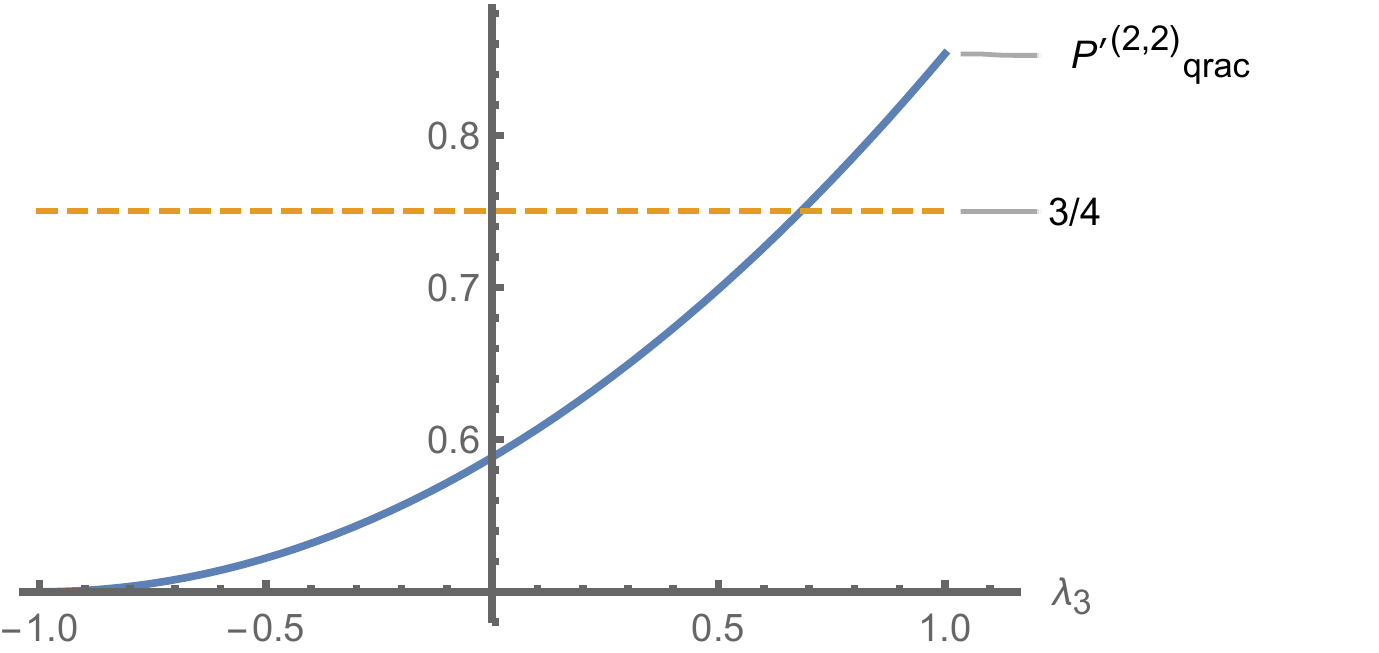}
\caption{Average success probability $P^{\prime(2,2)}_{qrac}$ vs. $\lambda_3$ graph. This graph shows that $P^{\prime(2,2)}_{qrac}>P^{(2,2),max}_{rac}=\frac{3}{4}$ for $1\geq\lambda_3 > 2^{\frac{3}{4}}-1$. }
\label{fig:prob_random}
\end{figure}
}
\end{example}

\subsubsection{Advantage in quantum steering}\label{subsubsec:adv-steer}
Suppose Alice and Bob share a bipartite quantum state $\rho^{AB}\in\cS(\cH_A\otimes\cH_B)$ and Alice has a measurement assemblage (i.e., a set of measurement) $\cM_A=\{M_x\}$. Each $M_x$ has the outcome set $\Omega_x$. Now state $\rho^{AB}$ is called unsteerable from Alice to Bob i.e., from $A$ to $B$ with $\cM_A$ if there exist a probability distribution $\pi_{\lambda}$, a set of states $\{\sigma_{\lambda}\}$  of $B$ and a set of probability distributions $P_A(a_x|x,\lambda)$ such that for all $a_x\in\Omega_x$ and for all $x$ the equality
\begin{equation}
\rho^B_{a_x|x}=\tr_A[(M_{a_x|x}\otimes\Id_B)\rho^{AB}]=\sum_{\lambda}\pi_{\lambda}P_A(a_x|x,\lambda)\sigma_{\lambda}
\end{equation}
holds. A state $\rho^{AB}$ is called unsteerable from $A$ to $B$ if it is unsteerable from $A$ to $B$ with any measurement assemblage $\cM_A$. A state $\rho^{AB}$ is called unsteerable if it is unsteerable from both $A$ to $B$ and from $B$ to $A$. Otherwise, it is called steerable. Steering is useful for different information-theoretic tasks \cite{steer-review}. There exist several steering inequalities each of which detects the steerability of a bipartite quantum state \cite{Das-steer-ineq, Wiseman-steer-ineq, Cao-steer-ineq, Costa-steer-ineq}. One such steering inequality using two measurement settings, we are writing below which was originally studied in \cite{Wiseman-steer-ineq, Das-steer-ineq, Costa-steer-ineq}. The bipartite state $\rho^{AB}$ is unsteerable then
\begin{equation}
F(\rho^{AB})=\frac{1}{\sqrt{2}}\mid\tr[\rho^{AB}(\sigma_x\otimes\sigma_x)]+\tr[\rho^{AB}(\sigma_z\otimes\sigma_z)]\mid\leq 1.\label{steer-ineq}
\end{equation}
\\
Now suppose Bob is preparing a bipartite entangled state $\rho^{AB}$ and sending the part $A$ to Alice. This shared bipartite entangled state, can be used in different information-theoretic tasks that can be performed with the help of $A$ to $B$ steering. We now have the following result in this connection (assuming that the quantum switch is not available to Alice and Bob)-

\begin{theorem}
Suppose Bob is sending the part $A$ of a bipartite state $\rho^{AB}$ to Alice (i.e., from $B$ to $A$) through an $n$-incompatibility breaking channel $\Lambda$ then $\rho^{\prime AB}=(\Lambda\otimes\cI)(\rho^{AB})$ is steerable from $A$ to $B$ with no measurement assemblage $\cM_A=\{M_x\}^n_{x=1}$ where $\cI$ is the identity channel. 
\end{theorem}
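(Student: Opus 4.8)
The plan is to show that $A$-to-$B$ steerability of the evolved state $\rho'^{AB}$ with an assemblage $\cM_A=\{M_x\}_{x=1}^n$ is governed entirely by the compatibility of the \emph{same} measurements pulled back through the dual channel $\Lambda^*$. The $n$-IBC hypothesis then forces this compatibility, which in turn produces an explicit local hidden state model for the assemblage.

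First I would transfer the channel from the state onto Alice's measurements. Writing $M_x=\{M_{a_x|x}\}$, the conditional states on Bob's side are $\rho^B_{a_x|x}=\tr_A[(M_{a_x|x}\otimes\Id_B)(\Lambda\otimes\cI)(\rho^{AB})]$. Applying the defining relation of the dual channel $\tr[\Lambda(\rho)A(x)]=\tr[\rho\,\Lambda^*(A(x))]$ on the $A$ factor, this becomes
\begin{equation}
\rho^B_{a_x|x}=\tr_A[(\Lambda^*(M_{a_x|x})\otimes\Id_B)\rho^{AB}].
\end{equation}
Because $\Lambda^*$ is CP and unital, each pulled-back setting $\Lambda^*(M_x)=\{\Lambda^*(M_{a_x|x})\}$ is again a valid POVM, so the assemblage generated by measuring $\cM_A$ on $\rho'^{AB}$ coincides with the one generated by measuring $\{\Lambda^*(M_x)\}_{x=1}^n$ on the untouched state $\rho^{AB}$.

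Next I would invoke the hypothesis that $\Lambda$ is an $n$-IBC. Since $\{M_1,\dots,M_n\}$ is a set of $n$ observables, the set $\{\Lambda^*(M_1),\dots,\Lambda^*(M_n)\}$ is compatible, so by the equivalent characterization of compatibility recalled in Sec.~\ref{sec:prelims} there exist a joint observable $\cJ=\{J_\lambda\}$ on $\cH_A$ and conditional distributions $P(a_x|x,\lambda)$ with $\Lambda^*(M_{a_x|x})=\sum_\lambda P(a_x|x,\lambda)J_\lambda$. Substituting this into the assemblage and using linearity of the partial trace gives $\rho^B_{a_x|x}=\sum_\lambda P(a_x|x,\lambda)\,\tr_A[(J_\lambda\otimes\Id_B)\rho^{AB}]$. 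Defining $\pi_\lambda=\tr[(J_\lambda\otimes\Id_B)\rho^{AB}]$ and $\sigma_\lambda=\tr_A[(J_\lambda\otimes\Id_B)\rho^{AB}]/\pi_\lambda$, positivity of $J_\lambda$ and $\rho^{AB}$ makes each $\sigma_\lambda$ a genuine state of $B$, while $\sum_\lambda J_\lambda=\Id$ yields $\sum_\lambda\pi_\lambda=1$. This is exactly the unsteerable decomposition $\rho^B_{a_x|x}=\sum_\lambda\pi_\lambda P(a_x|x,\lambda)\sigma_\lambda$, and since $\cM_A$ was arbitrary the claim follows.

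The only delicate point is the first step: one must justify moving the dual across the partial trace while it acts on $A$ alone, i.e. that $(\Lambda\otimes\cI)^{*}$ paired against $M_{a_x|x}\otimes\Id_B$ reproduces $\Lambda^*(M_{a_x|x})\otimes\Id_B$ under $\tr_A$ (this uses that the identity channel is self-dual). Once this bookkeeping is in place the remainder is a mechanical translation of the compatibility decomposition into the local-hidden-state ansatz, so the entire force of the theorem reduces to recognizing that the dual channel converts $A$-to-$B$ steering of $\rho'^{AB}$ into an incompatibility question about Alice's channel-dual measurements.
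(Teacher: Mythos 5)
Your proposal is correct and follows essentially the same route as the paper's proof: rewrite the conditional states via the dual channel as $\tr_A[(\Lambda^*(M_{a_x|x})\otimes\Id_B)\rho^{AB}]$, invoke the $n$-IBC property to obtain a joint observable $\cJ=\{J_\lambda\}$ with $\Lambda^*(M_{a_x|x})=\sum_\lambda P(a_x|x,\lambda)J_\lambda$, and read off the local hidden state model with exactly the same $\pi_\lambda$ and $\sigma_\lambda$. Your explicit flagging of the bookkeeping step (that the dual acts only on the $A$ factor) is a point the paper passes over silently, but it does not constitute a different argument.
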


\begin{proof}
Consider an arbitrary measurement assemblage $\cM_A=\{M_x\}^n_{x=1}$ for Alice. As $\Lambda$ is a $n$-IBC, \{$\Lambda^*(M_x)$\} is a compatible set and hence, has a joint observable $\cJ=\{J_{\lambda}\}$ such that $\Lambda^*(M_{a_x|x})=\sum_{\lambda}P_A(a_x|x,\lambda)J_{\lambda}$ where $P_A(a_x|x,\lambda)$ is some probability distribution. Then

\begin{align}
\rho^{\prime B}_{a_x|x}&=\tr_A[(M_{a_x|x}\otimes\Id_B)\rho^{\prime AB}]\nonumber\\
&=\tr_A[(M_{a_x|x}\otimes\Id_B)(\Lambda\otimes\cI)(\rho^{AB})]\nonumber\\
&=\tr_A[(\Lambda^*(M_{a_x|x})\otimes\Id_B)\rho^{AB}]\nonumber\\
&=\tr_A[(\Lambda^*(M_{a_x|x})\otimes\Id_B)\rho^{AB}]\nonumber\\
&=\tr_A[(\sum_{\lambda}P_A(a_x|x,\lambda)J_{\lambda}\otimes\Id_B)\rho^{AB}]\nonumber\\
&=\sum_{\lambda}P_A(a_x|x,\lambda)\tr_A[(J_{\lambda}\otimes\Id_B)\rho^{AB}]\nonumber\\
&=\sum_{\lambda}P_A(a_x|x,\lambda)\pi_{\lambda}{\sigma}_{\lambda}^B
\end{align}
where, $\pi_{\lambda}=\tr[(J_{\lambda}\otimes\Id_B)\rho^{AB}]$ and $\sigma^B_{\lambda}=\frac{1}{\pi_{\lambda}}\tr_A[(J_{\lambda}\otimes\Id_B)\rho^{AB}]$. Hence, $\rho^{\prime AB}$ is not steerable from $A$ to $B$ irrespective of the choice of the measurement assemblage $\cM_A=\{M_x\}^n_{x=1}$.
\end{proof}
From this result, the next corollary is immediate:
\begin{corollary}\label{obs-incom-break}
Suppose Bob is sending $A$ part of a bipartite state $\rho^{AB}$ to Alice (i.e., from $B$ to $A$) through an incompatibility breaking channel $\Lambda$ then $\rho^{\prime AB}=(\Lambda\otimes\cI)(\rho^{AB})$ is not steerable from $A$ to $B$ where $\cI$ is the identity channel. 
\end{corollary}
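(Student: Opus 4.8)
The plan is to obtain the corollary as an immediate consequence of the preceding theorem, using only the definition of an incompatibility breaking channel and the definition of $A\to B$ unsteerability. Recall from Sec.~\ref{subsec:special-type} that a channel is IBC exactly when it is $n$-IBC for \emph{every} positive integer $n$, and recall that $\rho^{AB}$ is unsteerable from $A$ to $B$ precisely when it is unsteerable from $A$ to $B$ with \emph{every} measurement assemblage $\cM_A$, the assemblages being allowed to contain any finite number of measurements.

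First I would fix an arbitrary measurement assemblage $\cM_A=\{M_x\}_{x=1}^{m}$ of $m$ measurements for Alice, where $m$ is an arbitrary positive integer. Since $\Lambda$ is IBC, it is in particular $m$-IBC, so I may apply the preceding theorem with $n=m$. This yields that $\rho^{\prime AB}=(\Lambda\otimes\cI)(\rho^{AB})$ is unsteerable from $A$ to $B$ with respect to this particular assemblage $\cM_A$.

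Since both the size $m$ and the assemblage $\cM_A$ were arbitrary, the unsteerability conclusion holds for every finite measurement assemblage. By the definition of unsteerability from $A$ to $B$ recalled above, this is exactly the assertion that $\rho^{\prime AB}$ is unsteerable from $A$ to $B$, which completes the argument. The one point requiring care --- and the closest thing to an obstacle here --- is that the definition of $A\to B$ unsteerability quantifies over assemblages of all cardinalities, so one must genuinely invoke the $n$-IBC property for \emph{every} $n$ and not merely for a single fixed value; this is precisely what the defining property of an IBC supplies.
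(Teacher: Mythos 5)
Your proof is correct and follows exactly the route the paper intends: the paper states the corollary is ``immediate'' from the preceding theorem, and your argument---fixing an arbitrary assemblage of $m$ measurements, invoking the $m$-IBC property guaranteed by the definition of an IBC, and then quantifying over all $m$ and all assemblages---is precisely the spelled-out version of that one-line derivation. Your closing remark about needing the $n$-IBC property for \emph{every} $n$, not just one fixed value, correctly identifies the only point of substance in the deduction.
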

Therefore, if Bob has only IBCs to communicate with Alice, the resulting bipartite state after the communication will no longer be steerable from $A$ to $B$. Corollary \ref{obs-incom-break} is also independently proved in the Theorem 1 of \cite{Franco-steer}. A similar corollary, but not exactly same as the corollary in \ref{obs-incom-break}, has been derived using channel state duality in \cite{Kiukas1}. But if Bob has a quantum switch, he can get rid of this situation. It will be clarified through the following example.
\begin{example}
\rm{Suppose Bob is sending Alice the $A$ part of a quantum state $\rho^{AB}=\ket{\psi}_{AB}\bra{\psi}$, where $\ket{\psi}_{AB}=\frac{1}{\sqrt{2}}[\ket{00}+\ket{11}]$. $\ket{\psi}_{AB}$ is steerable under suitable measurements. Clearly, $\frac{1}{\sqrt{2}}\mid\tr[\rho^{AB}(\sigma_x\otimes\sigma_x)]+\tr[\rho^{AB}(\sigma_z\otimes\sigma_z)]\mid=\sqrt{2}> 1$. Now suppose Bob has only IBCs, for example, the quantum channel $\Phi$, given in equation \eqref{good_chan} to communicate with Alice. Since, $\Phi$ is EBC (Therefore it is also IBC). Therefore, $(\Phi\otimes\cI)(\rho^{AB})$ is separable and so it is not steerable from Alice to Bob i.e., from $A$ to $B$. Now if Bob has a quantum switch he can improve the communication. After using quantum switch, the effective channel $\cC_{eff}(\rho)=Tr_{\cH_c}[\cU S_{\Phi,\Phi,\omega}(\rho)\cU^{\dagger}]$ is given in equation \eqref{good_chan_switched}. Therefore, the resulting shared state after the communication is
\begin{align}
\rho^{\prime AB}=&(\cC_{eff}\otimes\cI)(\rho^{AB})\nonumber\\
=&\frac{1}{2}[\cC_{eff}(\ket{0}\bra{0})\otimes\ket{0}\bra{0}+\cC_{eff}(\ket{0}\bra{1})\otimes\ket{0}\bra{1}\nonumber\\
&+\cC_{eff}(\ket{1}\bra{0})\otimes\ket{1}\bra{0}+\cC_{eff}(\ket{1}\bra{1})\otimes\ket{1}\bra{1}]\nonumber\\
=&\frac{(1+\lambda_3)^2}{4}\ket{\psi}_{AB}\bra{\psi}+\Big(1-\frac{(1+\lambda_3)^2}{4}\Big)\rho_{sep}^{AB}
\end{align}
where $\rho_{sep}^{AB}=\frac{2}{4-(1+\lambda_3)^2}\Big[\frac{(1-\lambda_3)^2}{4}\ket{00}\bra{00}\\
+\frac{1-\lambda^2_3}{2}\ket{01}\bra{01}+\frac{1-\lambda_3^2}{2}\ket{10}\bra{10}+\frac{(1-\lambda_3)^2}{4}\ket{11}\bra{11}\Big]$ is a separable state.

Now,  $\frac{1}{\sqrt{2}}\mid\tr[\rho^{AB}_{sep}(\sigma_x\otimes\sigma_x)]+\tr[\rho^{AB}_{sep}(\sigma_z\otimes\sigma_z)]\mid=\frac{3\lambda_3^2-2\lambda_3-1}{4-(1+\lambda_3)^2}$.

Therefore,

\begin{align}
F(\rho^{\prime AB})&=\frac{1}{\sqrt{2}}\mid\tr[\rho^{\prime AB}(\sigma_x\otimes\sigma_x)]+\tr[\rho^{\prime AB}(\sigma_z\otimes\sigma_z)]\mid\nonumber\\
&=\frac{(1+\lambda_3)^2}{4}\sqrt{2}+ \frac{3\lambda_3^2-2\lambda_3-1}{4}.
\end{align}
\begin{figure}[hbt!]
\centering
\includegraphics[scale=0.48]{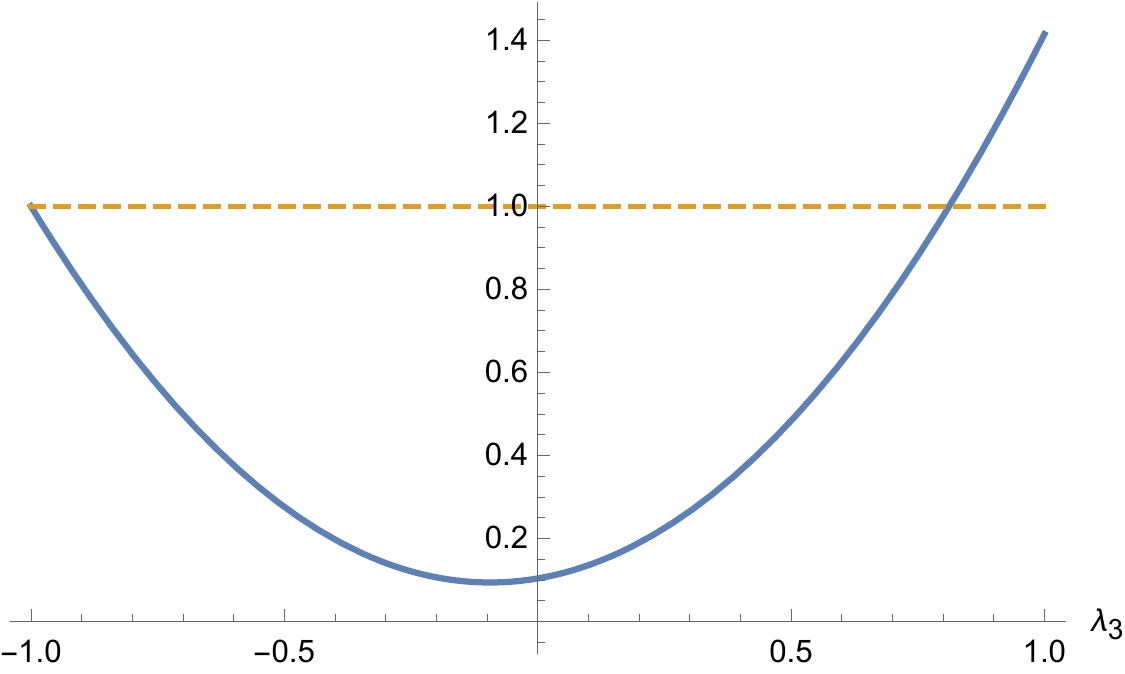}
\caption{$F(\rho^{\prime AB})$ vs. $\lambda_3$ graph. $F(\rho^{\prime AB})>1$ for $\lambda_3>0.8123$.  }
\label{fig:SteeringInequality}
\end{figure}

From the Fig. \ref{fig:SteeringInequality}, we  get that $F(\rho^{\prime AB})>1$ for $1\geq\lambda_3>0.8123$. Therefore, in this range of $\lambda_3$, $\cC_{eff}$ is not $2$-IBC. As $0.8123>2^{\frac{3}{4}}-1$, from Example \ref{ex_prob_adv_qrac}, we guess that possibly QRAC is a better detector of $n$-IBCs than steering inequalities at least in some cases.}

\end{example}

\subsubsection{Prevention of the loss of coherence}
Quantum coherence is an important resource that can be utilized in several information-theoretic and thermodynamic tasks \cite{res-coh}. Specifically, it can be used in work extraction \cite{Korzekwa},  quantum algorithms \cite{Hillery}, quantum metrology \cite{Giorda}, quantum channel discrimination \cite{Piani},  witnessing quantum correlations \cite{Bu} etc.\\
We know from the resource theory of coherence \cite{res-coh} that there are operations, known as incoherent operations which do not increase coherence but may decrease it and there are channels, known as coherence breaking channels that completely destroy the coherence of any quantum state.\\
Now suppose Alice is preparing quantum states for Bob who uses the coherence of these states w.r.t. some basis as a resource to get the advantage in different information-theoretic and thermodynamic tasks. But if Alice has only coherence-breaking channels to communicate with Bob, she will be unable to transfer the coherence of the state. Whereas if she has incoherent channels, the coherence of the state which Alice sends for Bob, will decrease when the state will reach Bob.\\
In these types of cases, if she has a quantum switch, she might be able to perform better communication, as stated in the following theorem.\\

\begin{theorem}
A coherence breaking qubit channel may be converted to a non-coherence breaking qubit channel with the help of a quantum switch along with a measurement-based controlled incoherent unitary operation.
\end{theorem}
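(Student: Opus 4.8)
The plan is to prove the statement constructively, by reusing the coherence-breaking channel $\Phi$ of equation \eqref{good_chan} together with the switch-plus-control construction already analysed in Observation 2. First I would fix the incoherent (diagonal) basis to be the eigenbasis $\{\ket{0},\ket{1}\}$ of $\sigma_z$ and recall the Bloch-vector characterisation of coherence breaking: since the off-diagonal entries of $\rho=\frac{1}{2}(I+\vec{a}\cdot\vec{\sigma})$ are governed by the components $a_1,a_2$, a Pauli channel with $T$-matrix $\mathrm{diag}(\lambda_1,\lambda_2,\lambda_3)$ outputs a diagonal state for every input, and is therefore coherence breaking, if and only if $\lambda_1=\lambda_2=0$. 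Inspecting $\mathcal{T}_\Phi=\mathrm{diag}(0,0,-\lambda_3)$ then shows that $\Phi$ is exactly such a channel, so it is coherence breaking for every $\lambda_3\in[-1,1]$.

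Next I would verify that the controlled operation used in Observation 2 is a genuine incoherent controlled unitary. The operation $\cU=\Id\otimes\ket{+}\bra{+}+\sigma_z\otimes\ket{-}\bra{-}$ measures the control qubit in the $\{\ket{+},\ket{-}\}$ basis and, conditioned on the outcome, applies either $\Id$ or $\sigma_z$ to the system; both $\Id$ and $\sigma_z$ are diagonal in the fixed basis, hence incoherent unitaries, so the full operation qualifies as a measurement-based controlled incoherent unitary as required by the statement. The remaining computation is then inherited: feeding $\Phi$ into the switch with control state $\omega=\ket{+}\bra{+}$ and applying $\cU$ yields, after tracing out the control, the effective channel $\cC_{eff}$ whose $T$-matrix (displayed just after equation \eqref{good_chan_switched}) has $\lambda_1=\lambda_2=\frac{(1+\lambda_3)^2}{4}$. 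This off-diagonal transfer coefficient is strictly positive for every $\lambda_3\neq-1$, so $\cC_{eff}$ maps, for instance, $\ket{+}\bra{+}$ (Bloch vector $(1,0,0)$) to a state with nonzero off-diagonal entries and is therefore not coherence breaking. Choosing any $\lambda_3\in(-1,1]$ thus exhibits a coherence-breaking channel converted into a non-coherence-breaking one, which proves the theorem.

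The conceptual point I would emphasise, and the step requiring the most care, is why this does not contradict the resource theory of coherence. Incoherent operations alone can never recreate coherence once it has been destroyed, so without the switch no post-processing of $\Phi$ by incoherent unitaries could help. The switch evades this obstruction because the resulting map is \emph{not} a plain concatenation of $\Phi$ with incoherent operations; the indefinite-order circuit couples two copies of $\Phi$ through the control qubit, and the coherence recovered in the system is effectively supplied by the coherence of the control resource $\omega=\ket{+}\bra{+}$. The main obstacle is therefore not the algebra (which is taken over wholesale from Observation 2) but arguing cleanly that $\cU$ together with the switch is an admissible operation in the coherence resource theory, so that the recovered system coherence is a bona fide effect of the switch rather than an artefact of illegitimately injecting coherence directly onto the system line.
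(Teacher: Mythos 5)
Your proof is correct, but it takes a different route from the paper's. You prove the (existential) claim by recycling the specific unital channel $\Phi$ of equation \eqref{good_chan} and the computation already done in Observation 2: since $\mathcal{T}_\Phi=\mathrm{diag}(1,0,0,-\lambda_3)$ has vanishing $x$ and $y$ blocks, $\Phi$ is coherence breaking, while the switched-and-controlled channel $\cC_{eff}$ of equation \eqref{good_chan_switched} has $\mathcal{T}_{11}=\mathcal{T}_{22}=\frac{(1+\lambda_3)^2}{4}>0$ for $\lambda_3\neq -1$, and the controlled operation $\cU=\Id\otimes\ket{+}\bra{+}+\sigma_z\otimes\ket{-}\bra{-}$ uses only the incoherent unitaries $\Id$ and $\sigma_z$ on the system — all of which is accurate and suffices for the ``may be converted'' statement. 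The paper instead works with the \emph{general} coherence-breaking qubit channel, whose $T$-matrix $\mathrm{diag}$-form carries both a non-unital shift $t$ and a parameter $\lambda$ (your $\Phi$ is the special case $t=0$, $\lambda=-\lambda_3$): it computes the Kraus operators, shows that tracing out the control with no controlled operation leaves the channel coherence breaking (so the controlled unitary is genuinely needed), then applies a controlled unitary from a three-parameter family and specializes to the incoherent case $\theta=0$, $\phi_1=\pi/2$, obtaining off-diagonal coefficients $\gamma_1^R=\gamma_2^R=\bigl((1-\lambda)^2-t^2\bigr)/4$, which are generically nonzero under the complete-positivity constraint $|t|+|\lambda|\le 1$. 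What your approach buys is economy and immediate verifiability — no new algebra at all. What the paper's approach buys is generality (the conversion works for essentially every coherence-breaking qubit channel, including non-unital ones), an explicit demonstration that the switch alone, without the controlled unitary, gives no advantage, and the structural remark that the resulting channel still cannot \emph{create} coherence (the relevant $\mathcal{T}_{13},\mathcal{T}_{23},\mathcal{T}_{31},\mathcal{T}_{32}$ entries vanish), only reduce its loss. Your closing discussion of where the recovered coherence comes from (the control state $\omega=\ket{+}\bra{+}$) is a sound conceptual addition, though not required for the proof.
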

\begin{proof}
Suppose we are interested in coherence w.r.t. the basis $\{\ket{0}, \ket{1}\}$ of a qubit.
 Then, the $T$ matrix of any coherence breaking channel $\mathcal{E}$ has the following form \citep{Pati18}:
 \begin{align}
\mathcal{T}_{\mathcal{E}} = 
\begin{pmatrix}
1 & 0 & 0 & 0\\
0 & 0& 0&0\\
 0 & 0 & 0&0\\
t&0&0&  \lambda\\
\end{pmatrix}
\end{align}
the Kraus operators (written in the eigen basis of $\sigma_z$) for this channel are given as follows
\begin{equation}
\begin{aligned}
	K_1 &=\sqrt{\dfrac{1}{2}(1-\lambda-t)}
\begin{pmatrix}
0 & 0\\
1 & 0
\end{pmatrix}\\
	K_2 &=\sqrt{\dfrac{1}{2}(1+\lambda-t)}
\begin{pmatrix}
0 & 0\\
0 & 1
\end{pmatrix}\\
	K_3 &=\sqrt{\dfrac{1}{2}(1-\lambda+t)}
\begin{pmatrix}
0 & 1\\
0 & 0
\end{pmatrix}\\
K_4 &=\sqrt{\dfrac{1}{2}(1+\lambda+t)}
\begin{pmatrix}
1 & 0\\
0 & 0
\end{pmatrix}\\
\end{aligned}
\end{equation}
Let us start  with the most general qubit state (written in the eigen basis of $\sigma_z$) given by
\begin{equation}
	\rho=\dfrac{1}{2}\begin{pmatrix}
1+c & a-ib\\
a+ib & 1-c
\end{pmatrix}
\end{equation}
Given the Krauss operators of the channel, we can calculate the channel under the switch using   equation \eqref{switched_channel}. Let the final state (after the use of the switch) be represented by $\mathsf{S}_{\mathcal{E},\omega}(\rho)$, where $\mathcal{E}$ is the coherence breaking channel. If we trace over the control part, the resulting channel naturally has the form of the initial channel and we get no advantage as we can see from the final state after tracing out the control qubit.

\begin{align}
\Gamma^{\prime}_{eff}(\rho)&=\tr_{\cH_c} [\mathsf{S}_{\mathcal{E},\mathcal{E},\omega}(\rho) ] \nonumber\\
&=\dfrac{1}{2}\begin{pmatrix}
1+c\lambda^2+t(1+\lambda) & 0\\
0 & 1-c\lambda^2-t(1+\lambda) 
\end{pmatrix}
\end{align}

The T matrix of this effective channel is therefore given in the following coherence breaking form
\begin{equation}
\begin{aligned}
\mathcal{T}_{\Gamma^{\prime}_{eff}}&=\begin{pmatrix}
1 & 0 & 0 & 0\\
0&0 & 0 & 0\\
0& 0 & 0 & 0\\
t(1+\lambda)& 0 & 0 & \lambda^2
\end{pmatrix}.
\end{aligned}
\end{equation}
But, instead if we use a general measurement-based incoherent controlled unitary operation before tracing out the control qubit we can in-fact get an effective non-coherence breaking channel. Let the measurement-based controlled unitary be given by  $\cU=I\otimes\ket{+}\bra{+}+U\otimes\ket{-}\bra{-}$ where
\begin{equation}
U=\begin{pmatrix}
e^{i\phi_1}\cos\theta & e^{i\phi_2}\sin\theta\\
-e^{-i\phi_2}\sin\theta& e^{-i\phi_1}\cos\theta
\end{pmatrix}.
\end{equation}

Here the matrix representation of $U$ is written in the eigen basis of $\sigma_z$. The final qubit state after the use of the switch followed by the controlled unitary and the partial trace over the control qubit is given by the following:
\begin{align}
&\Gamma_{eff}(\rho)\nonumber \\
&=\tr_{\cH_c} [\cU\mathsf{S}_{\mathcal{E},\mathcal{E},\omega}(\rho)\cU^\dagger ] \nonumber \\
&=\dfrac{1}{2}\begin{pmatrix}
1+t_1+c\eta_3+a\eta_1-ib\eta_2&t_2^*+c\gamma_3^*+a\gamma_1^*-ib\gamma_2^*\\
t_2+c\gamma_3+a\gamma_1+ib\gamma_2& 1-t_1-c\eta_3-a\eta_1+ib\eta_2
\end{pmatrix}
\\
&\text{where,}\\
 &t_1=(1+\lambda)(6+2\cos\theta)t/8,\\
&\eta_1=[-\cos(\phi_1-\phi_2)\sin2\theta (1-2\lambda+\lambda^2-t^2)]/8,\\
&\eta_2=[\sin(\phi_1-\phi_2)\sin2\theta (1-2\lambda+\lambda^2-t^2)]/8, \\
&\eta_3=(1+2\lambda+t^2)(1-\cos^2 2\theta)/8+\lambda^2\cos2\theta\\
&\gamma_1= (1-2\lambda+\lambda^2-t^2)(1-e^{-2i\phi_1}\cos\theta^2+e^{-2i\phi_2}\sin\theta^2)/8\\
&\gamma_2= (1-2\lambda+\lambda^2-t^2)(1-e^{-2i\phi_1}\cos\theta^2-e^{-2i\phi_2}\sin\theta^2)/8\\
&\gamma_3= (1-2\lambda+\lambda^2-t^2)e^{-i(\phi_1+\phi_2)}\sin 2\theta\\
&t_2=-2e^{-i(\phi_1+\phi_2)}(\lambda+1)t\sin 2\theta.
\end{align}

The $\mathcal{T}$ matrix of the effective channel is given as follows
\begin{equation}
\begin{aligned}
	\mathcal{T}_{\Gamma_{eff}}&=\begin{pmatrix}
1 & 0 & 0 & 0\\
t_2^R&\gamma_1^R & -\gamma_2^I & \gamma_3^R\\
t_2^I&\gamma_1^I & \gamma_2^R & \gamma_3^I\\
t_1& \eta_1 & -i\eta_2 & \eta_3
\end{pmatrix}\\
&\text{where} \hspace{1cm} \text{$x^R=Re(x)$  and $x^I=Im(x)$}
\end{aligned}
\end{equation}
 The  case of $\theta=0$ corresponds to the incoherent unitary operator, for which the T-matrix takes the following form:
\begin{equation}
\begin{aligned}
 \mathcal{T}_{\Gamma_{eff}(\theta=0)}&=\begin{pmatrix}
1 & 0 & 0 & 0\\
0&\gamma_1^R &  -\gamma_2^I & 0\\
0& \gamma_1^I & \gamma_2^R & 0\\
t_1& 0 & 0 & \eta_3
\end{pmatrix}
\end{aligned}
\end{equation}
Here, for $\phi_1=\frac{\pi}{2}$, $\eta_3=\lambda^2$, we have $t_1=t(1 + \lambda )/2$ and $\gamma_1^R=\gamma_2^R=(1 - 2  \lambda+ \lambda^2 - t^2)/4$. Therefore, using an incoherent controlled unitary, a quantum switch can be used to convert a coherence breaking channel into a non-coherence breaking channel. However in this particular case, since the components $\mathcal{T}_{13}$, $\mathcal{T}_{23}$, $\mathcal{T}_{31}$, and $\mathcal{T}_{32}$ are zero, the effective channel after the switch operation can not generate coherence but can reduce the loss.
\end{proof}

 \subsection{ Communication using noisy quantum switch}\label{subsec:noisy switch}
In the previous sections, we have assumed that the control qubit is not interacting with the environment. But in practice the control qubit will interact with its environment which will introduce noise in the state of the quantum switch. Therefore, through measurement-based controlled operation one may not get the desired effective channel. Now, we will study this through an example. 
\begin{example}[Switch with depolarising noise] 
\rm{Consider, the quantum channel $\Lambda_{perfect}(\rho)=\frac{1}{2}\sigma_x\rho\sigma_x+\frac{1}{2}\sigma_y\rho\sigma_y$. We know that perfect communication can be achieved through this channel using quantum switch \citep{chiribella_perfect}. 
Therefore,

\begin{align}
\mathsf{S}_{\Lambda_{perfect},\Lambda_{perfect},\omega}(\rho)=\frac{1}{2}\rho\otimes\omega+\frac{1}{2}\sigma_z\rho\sigma_z\otimes\sigma_z\omega_z\sigma_z
\end{align}
where, $\omega=\ket{+}\bra{+}$.\\
After this, suppose the depolarising, noisy channel $\Gamma^{t}_2$ has acted on the control qubit state $\omega$ due to the interaction with the environment. Then the joint state of the system and the switch is given by
\begin{align}
\mathsf{S}^{\prime}_{\Lambda_{perfect},\Lambda_{perfect},\omega}(\rho)&=\frac{1}{2}\rho\otimes\Gamma^{t}_2(\omega)+\frac{1}{2}\sigma_z\rho\sigma_z\otimes\Gamma^{t}_2(\sigma_z\omega\sigma_z)\nonumber\\
&=\frac{1}{2}(\frac{(1+t)}{2}\rho+\frac{(1-t)}{2}\sigma_z\rho\sigma_z)\otimes\ket{+}\bra{+}\nonumber\\
&+\frac{1}{2}(\frac{(1+t)}{2}\sigma_z\rho\sigma_z+\frac{(1-t)}{2}\rho)\otimes\ket{-}\bra{-}.
\end{align}
Now suppose Alice is implementing a measurement based controlled operation  $\cU=\Id\otimes \ket{+}\bra{+}+\sigma_z\otimes\ket{-}\bra{-}$
Then the final state is given by
\begin{align}
Tr_{\cH_c}[\cU\mathsf{S}^{\prime}_{\Lambda_{perfect},\Lambda_{perfect},\omega}(\rho)\cU^{\dagger}]=\frac{(1+t)}{2}\rho+\frac{(1-t)}{2}\sigma_z\rho\sigma_z.
\end{align}

Therefore, the above channel is no longer an identity channel, and hence, perfect communication is not achieved. In a similar way, it can be easily shown that the noisy quantum switch may hamper improvement in communication through other quantum channels.}

\end{example}

\section{conclusion}\label{sec:conc}
In this work, we have presented several results regarding the improvement in quantum communication using the quantum switch. {We have argued that it is important to study the effect of the quantum switch by studying one output branch at a time. We have shown that if a useless channel remains useless even after using the quantum switch, concatenating it with  another channel and subsequently using the quantum switch may provide improvements in communication. In particular, we have studied the conversion of \emph{completely useless} channels into useful channels through concatenation and use of quantum switch. This result might be useful in quantum communication technology in the future.} We have shown that  improvements in communication due to the action of the quantum switch help us (i) to get the advantage in Quantum Random Access Codes as well as to demonstrate the quantum steering when only useless channels are available for communication, (ii) to prevent the loss of coherence, etc. We have shown that noise introduced in the switch may hamper the communication improvement.\\
Our work opens up several research avenues. It is an open problem to find out the necessary-sufficient condition for a generic quantum channel that can provide improvement under the action of the quantum switch. Though we have shown that if a channel is useless even after using the quantum switch, concatenating it with another channel may provide improvement in communication under the action of the quantum switch, the necessary and sufficient condition for this improvement, in this case, is not known. It may also be interesting to compare the effectiveness of the noisy quantum switch in achieving improvements in different quantum information processing (or quantum communication) tasks. A study of the communication problems, considered here in the context of quantum switch, in some other related contexts (like SDPP, etc.) is worth pursuing. Moreover, a comparison of the results -- obtained here using the quantum switch and the ones that might be obtained using SDPP -- is also an important aspect, to be considered in near future. Another interesting future direction is the application of quantum switch in the context of quantum dynamical maps to see whether, in particular, action of such a switch can improve the speed of communication.

\section{Acknowledgements}

We would like to thank Huan-Yu Ku for the valuable comments on this work.

\end{document}